\documentclass[12pt,a4paper]{article}
\usepackage{mathptmx}
\usepackage[utf8]{inputenc}
\usepackage[T1]{fontenc}
\usepackage[margin=0.7in]{geometry}
\usepackage{amsmath}
\usepackage{amsfonts}
\usepackage{amssymb}
\usepackage{graphicx}
\usepackage{verbatim}
\usepackage{tabularx}
\usepackage{booktabs}
\usepackage{array}    
\usepackage{soul}
\usepackage{xcolor}

\usepackage{float}
\usepackage{amsmath}
\usepackage{amsthm}
\usepackage{amssymb}
\usepackage{graphicx}
\usepackage{subcaption}
\usepackage{authblk}
\usepackage{hyperref}
\usepackage{setspace}

\theoremstyle{plain}
\newtheorem{thm}{\protect\theoremname}
\theoremstyle{definition}

\theoremstyle{plain}

\theoremstyle{remark}
\newtheorem*{rem*}{\protect\remarkname}

\usepackage{tikz}
\providecommand{\definitionname}{Definition}
\providecommand{\propositionname}{Proposition}
\providecommand{\remarkname}{Remark}
\providecommand{\theoremname}{Theorem}

\title{\textbf{A Stage-Structured Deterministic Model of Fall Armyworm Infestation on Maize Farming}}

\author{ Donald Okoth Ojwang $^1$, Mamadou Pathe Ly $^{1,2}$,  Shaibu Osman $^3$}
\date{ %
	$^1$ Institute of Mathematical Sciences, African Institute for Mathematical Sciences, Senegal.\\%
	$^2$ Institut Ouest Africain de Mathematiques, Gamal Abdel Nasser University of Conakry, Guinea. \\
	$^3$ Department of Basic Sciences, University of Health and Allied Sciences, Ghana.\\[2ex]%
	{*} Corresponding author e-mail: ojwang.d.okoth@aims-senegal.org\\}
\newtheorem{case}{Case}
\begin{document}	
\maketitle	
\begin{abstract}
Fall Armyworm (FAW) poses a serious threat to maize production in many regions due to its aggressive feeding habits and rapid development cycle. In this study, we developed and analyzed a stage-structured mathematical model to evaluate the impact of different FAW larval instars on maize dynamics during the vegetative and reproductive stages. Analytical results indicate that the two models have unique and positively bounded solutions for all time $t\geq 0$ and admit four equilibrium points: the trivial, non-trivial, maize extinction and coexistence equilibria. The behavior of the model was studied using stability analysis to find conditions under which FAW dies out or continues to spread. Furthermore, sensitivity analysis and numerical simulations were conducted to examine how key parameters affect FAW population dynamics and maize. Numerical simulations of the model in both stages indicate that there is high destruction of maize plants in both vegetative and reproductive stages of maize production due to increased egg production and larval population density. The extensive damage caused by large populations of eggs, larvae and adult moths motivated an extension of the two models to include intervention measures such as traditional methods like handpicking and chemical pesticides. Numerical results indicate that these control strategies significantly suppressed the FAW population with a resultant increase in the maize plant population towards its maximum capacity during the vegetative and reproductive stages, respectively. 
\end{abstract}

\textbf{Keywords}: Fall Armyworm, Reproduction number, Equilibrium points, Stability, Sensitivity 
\par analysis, Vegetative stage, Reproductive stage, larval instars, Maize plants.

\section{Introduction}
Agriculture plays a fundamental role in sustaining food security and economic development, especially in Sub-Saharan Africa, where a large proportion of the population depends on farming for their livelihoods. Among the major cereal crops, maize is one of the most widely cultivated and consumed staple foods, serving as a primary source of nutrition for millions of people \cite{adiaha2017impact} and was believed to have originated from Mexico about 7000 years ago and  transformed by Native Americans into a better source of food \cite{sharon2020severity}. Maize contributes significantly to household income, livestock feed production and industrial raw materials. In Sub-Saharan Africa, maize occupies more than 33 million hectares of cultivated land and remains a critical component of regional food security \cite{nwankwo2019boosting}.  Daudi et al. \cite{daudi2021dynamics} reported that 44 countries in Sub-Saharan Africa rely on white maize for 85–95\% of their staple food consumption.
Consequently, factors that threaten maize production have serious implications for agricultural sustainability, food availability and economic stability.

One of the most destructive threats to maize production is the Fall Armyworm (FAW), \textit{Spodoptera frugiperda}. FAW is a highly destructive \textit{lepidopteran polyphagous} pest, characterized by its strong migratory behavior, high reproductive potential and wide host range \cite{ghimire2021faw}. Since its first confirmed detection in West Africa in 2016 \cite{goergen2016first}, the pest has spread rapidly across the African continent, causing substantial yield losses and threatening the livelihoods of smallholder farmers. Studies have reported maize yield reductions of up to 90\% in heavily infested regions, contributing to food insecurity, malnutrition and poverty \cite{nget2024spodoptera}.  These outbreaks advanced over thousands of kilometers, reaching East African countries including Tanzania in January 2017, Kenya in April 2017 and Uganda in May 2017  \cite{daudi_modelling_2021}. 
Mathematical modelling has become an important tool for understanding complex biological systems and supporting decision making in agriculture. Differential equation models have been widely used to investigate pest population dynamics, crop-pest interactions and the effectiveness of control measures \cite{Bukova-Guzel2011}. In recent years, several mathematical models have been developed to study the impact of FAW infestation on maize farming. See \cite{daudi2021dynamics, daudi_modelling_2021, daudi2021mathematical, daudi2021fractional, angelov2016mathematical, Tchiengang2022multi, ali2021numerical, Dehingia2024,Dehingia2026,korobeinikov2009stability,Farman2026} for more details.

Daudi et al. \cite{daudi2021dynamics} developed a non-autonomous model using a Holling Type II functional response to describe FAW maize interactions in a periodic environment. They incorporated time-dependent control strategies, with simulations showing reduced FAW population under control.
Daudi et al. \cite{daudi2021mathematical} developed a stage-structured fractional-order model using the Caputo derivative to study FAW maize interactions, showing that pesticide use and handpicking reduced FAW population and increased maize biomass. On the other hand, Daudi et al. \cite{daudi_modelling_2021} proposed a two-subgeneric ODE model incorporating the developmental stages of maize and FAW, demonstrating that the immigration of adult moths increased maize biomass destruction, while harvesting and pesticide use improved yield. Anguelov et al. \cite{angelov2016mathematical} formulated an ODE model to assess mating disruption control using artificial female pheromones, showing that reduced fertilization lowered offspring production and male availability. Shabir et al. \cite{ali2021numerical} developed a fractional order FAW maize model using the Atangana–Baleanu fractional operator, incorporating all life stages and non-biological controls. Simulations showed faster stability for lower fractional orders, but the model overlooked larval cannibalism, which significantly affects population dynamics. Tchiengang et al. \cite{Tchiengang2022multi}  proposed a semi-discrete seasonal pest model and demonstrated the role of the basic reproduction number in determining pest persistence. Dehingia et al. \cite{Dehingia2026} proposed a Caputo fractional-order nutrient–plankton–fish interaction model to investigate ecosystem dynamics, stability and control of a complex biological system. Farman et al. \cite{Farman2026}
proposed a Caputo fractional-order model to study the interactions between global warming, dust pollution and plant biomass, highlighting the role of fractional calculus in analyzing stability and designing effective control strategies for environmental systems. Korobeinikov \cite{korobeinikov2009stability} developed a general predator–prey model and employed Lyapunov methods to establish global stability conditions for predator-free and coexistence equilibria, providing a theoretical basis for the analysis of interacting biological populations. Dehingia et al. \cite{Dehingia2024} developed a nutrient–phytoplankton–zooplankton model with refuge and time delay to investigate ecosystem dynamics. They showed that increasing the time delay can destabilize the system through Hopf bifurcation, while refuge temporarily stabilizes the population dynamics, highlighting the important role of time delays in ecological models.

The aforementioned studies, together with several others in the literature, have significantly advanced the understanding of FAW maize interactions. Nevertheless, important limitations remain. Most existing models treat all larval instars as a single homogeneous compartment, thereby neglecting differences in feeding intensity, developmental progression and crop damage among larval instars. This simplification may lead to an incomplete representation of the infestation process, since different larval instars contribute differently to maize damage. In addition, although larval cannibalism is a well documented ecological characteristic of FAW population, it is rarely incorporated explicitly into mathematical models. Ignoring cannibalistic interactions may underestimate their regulatory effect on larval survival and population growth \cite{assefa2019fall}. Consequently, the combined effects of larval stage heterogeneity and cannibalistic behaviour remain insufficiently explored in the current literature.

The novelty of this study can be summarized into two main points. First, unlike existing stage-structured FAW maize models that aggregate all larval instars into a single compartment, we partition the larval instars into early instar larvae (1-3) and late instar larvae (4-6), providing a more biologically realistic representation of larval development, feeding behaviour and crop damage. Second, the proposed model explicitly incorporates both intra-stage and inter-stage cannibalistic interactions through nonlinear terms. These extensions enable the model to capture important biological mechanisms that are often neglected in the literature, thereby providing a more realistic description of FAW population dynamics and their impact on maize.

The rest of the paper is organized as follows. Section \ref{sect2} presents the formulation of the stage-structured FAW maize model. Specifically, the model's steady states have been identified and their stability has been examined as well. Section \ref{sect3} formulates and analyzes the optimal control problem for managing FAW infestation. Section \ref{sect4} presents the numerical simulations and discusses the effects of the proposed control strategies. Finally, Section \ref{sect5}  concludes the paper and outlines directions for future research.

\section{Model Formulation and Analysis}\label{sect2}
\subsubsection{Model Formulation}
The model comprises two stage-structured populations, maize and FAW, leading to six distinct compartments. Maize growth is analyzed from emergence to full maturity over time  $t>0$ and is divided into two main periods, period I and period II. Period I occurs over the intervals $[0 , t_1]$, representing the vegetative stage. This stage includes key early growth events such as the planting of maize seeds, their emergence from the soil, whole leaf formation and the tasseling stage. Period II is assumed to take place over the time period $[t_1, t_2]$, which represents the reproductive stage and involves the development of corncob, silk, kernel and crop maturity.

The FAW population is divided into four compartments: eggs $E(t)$, early instar larvae $L_1(t)$ , late instar larvae $L_2(t)$ and adult moths $A(t)$. Although the biological life cycle of FAW includes a pupal stage, it is not modelled as a separate compartment in this study. Biologically, pupae do not feed and therefore do not contribute directly to maize damage \cite{Bista2020}. Its biological effect is incorporated into the effective maturation rate from larvae to adult moths. In particular, parameter $\delta_2$, reflects the combined duration of larval development and pupation, thereby accounting for the delay in adult emergence and its influence on population growth dynamics.

The egg population increases through oviposition by adult female moths at a constant rate  $\lambda$ and  decreases through hatching into larvae at rate $\beta$ and natural mortality at rate $\mu_1$. Similarly, the adult moth population grows as larvae mature into adult at a constant rate $\delta_2$ but declines over time due to natural mortality rate $\mu_4$. The model is formulated based on the following assumptions:
\begin{enumerate}
    \item  Maize is planted at time $t=0$ and each maize plant grows at a uniform and continuous rate from the vegetative to the reproductive stage.
   \item The maize plant population in a farm cannot exceed the carrying capacity $k$ as the crop approaches the maturity stage $T$.
    \item Maize is the only source of food for the larval instars, therefore, in the absence of maize, the larvae cannot survive and eventually die out.
    \item At time $t=0, M_1(0)=k$, where $k$ represents the field's carrying capacity.
    \item FAW infestation is at both the vegetative and reproductive stages of maize growth.
\end{enumerate}
\newpage
A summary of the  model parameters and variables is presented in Tables \ref{table1} and  \ref{table2} respectively.
\begin{table}[h!]
\centering
\caption{\textbf{Description of the model parameters.}}
\resizebox{\textwidth}{!}{%
\begin{tabular}{|c|m{16.5cm}|}
\hline
\textbf{Parameter} & \textbf{Description} \\
\hline
$b$ & Intrinsic growth rate of maize plant population at $t=0$ \\
\hline
$k$ & Maximum number of maize plants at $t=0$ \\
\hline
$\gamma_1$ & Maize destruction rate by early instar larvae in Period I \\
\hline
$\gamma_2$ & Maize destruction rate by late instar larvae in Period I \\
\hline
$\gamma_3$ & Maize destruction rate by early instar larvae in Period II \\
\hline
$\gamma_4$ & Maize destruction rate by late instar larvae in Period II \\
\hline
$\alpha$ & Proportion of female adult moths \\
\hline
$\lambda$ & Egg laying rate \\
\hline
$\beta$ & Hatching rate of FAW eggs into early instar larvae \\
\hline
$a_1$ & Conversion rate of maize into feed by early instar larvae (Period I) \\
\hline
$a_2$ & Conversion rate of maize into feed by late instar larvae  (Period I) \\
\hline
$a_3$ & Conversion rate of maize into feed by early instar larvae (Period II) \\
\hline
$a_4$ & Conversion rate of maize into feed by late instar larvae (Period II) \\
\hline
$\delta_1$ & Transition rate of early to late instar larvae \\
\hline
$P_1$ & Rate of cannibalism between early and late instar larvae  \\
\hline
$P_2$ & Rate of cannibalism within late instar larvae  \\
\hline
$\mu_1$ & Mortality rate of egg \\
\hline
$\mu_2$ & Natural mortality rate of early instar larvae \\
\hline
$\mu_3$ & Natural mortality rate of late instar larvae \\
\hline
$\mu_4$ & Natural mortality rate of adult moth \\
\hline
$\delta_2$ & Rate at which late instar larvae develop into adult moth \\
\hline
\end{tabular}}
\label{table1}
\end{table}

\begin{table}[h!]
\centering
\caption{\textbf{Description of the model variables.}}
\begin{tabular}{|c|p{14.5cm}|}
\hline
\textbf{Variables} & \textbf{Description} \\
\hline
$M_1(t)$ & Maize plant population in the vegetative stage at any time t \\
\hline
$M_2(t)$ & Maize plant population in the reproductive stage at any time t \\
\hline
$E(t)$ & Egg population density at any time t \\
\hline
$L_1(t)$ & Early instar larvae population at time $t > 0$ \\
\hline
$L_2(t)$ & Late instar larvae population at time $t > 0$ \\
\hline
$A(t)$ & Population density of adult moth at any time $t$ \\
\hline
\end{tabular}
\label{table2}
\end{table}
All state variables and model parameters are assumed to be non-negative, consistent with their biological and ecological interpretation. The compartmental dynamics are described in detail for the vegetative stage. Since the equations governing the FAW life stages in the reproductive stage follow the same formulation, only the maize equation is presented separately, while the same biological interpretation applies to the remaining state variables.
\\
The dynamics of the maize plant population is governed by the equation
\begin{align}
\frac{dM_1}{dt} = bM_1 \left(1 - \frac{M_1}{k} \right) - (\gamma_1 L_1 + \gamma_2 L_2) M_1,\tag{1}
\end{align}
where $bM_1\left(1-\frac{M_1}{k}\right)$ represents the logistic growth of maize with carrying capacity $k$. The terms $\gamma_1L_1M_1$ and $\gamma_2L_2M_1$ represent the reduction in the maize plant population due to feeding by the early instar larvae and late instar larvae, respectively. Similarly, maize dynamics during the reproductive stage is governed by
\begin{align}
\frac{dM_2}{dt} = b M_2 \left(1 - \frac{M_2}{k} \right) - (\gamma_3 L_1 + \gamma_4 L_2) M_2,\tag{2}
\end{align}
where $bM_2\left(1-\frac{M_2}{k}\right)$ represents the logistic growth of maize with carrying capacity $k$. The terms \( \gamma_3 L_1 M_2 \) and \( \gamma_4 L_2 M_2 \) represent maize destruction by the early instar larvae and late instar larvae, respectively.  The following equation describes the FAW egg population dynamics:
\begin{align}
\frac{dE}{dt} = \alpha \lambda A - (\beta + \mu_1) E,\tag{3}
\end{align}
where the term $\alpha\lambda A$ represents egg production by female adult moths. The egg population decreases through hatching into early instar larvae at rate $\beta E$  and  natural mortality at rate  $\mu_1 E$. The dynamics of the early instar larvae is described by the equation
\begin{align}
\frac{dL_1}{dt} = \beta E + a_1 \gamma_1 L_1 M_1 - \delta_1 L_1 - P_1 L_1 L_2 - \mu_2 L_1.\tag{4}
\end{align}
The early instar larvae increase through egg hatching and feeding on maize plants, represented by the terms $\beta E$ and $a_1\gamma_1L_1M_1$, respectively. It decreases through development into the late instar larvae at rate $\delta_1 L_1$, inter-stage cannibalism at rate $P_1 L_1 L_2$ and natural mortality at rate $\mu_2 L_1$. The dynamics of the late instar larvae is given by
\[
\frac{dL_2}{dt} = \delta_1 L_1 + a_2 \gamma_2 L_2 M_1 + P_1 L_1 L_2 - P_2 L_2^2 - \delta_2 L_2 - \mu_3 L_2.\tag{5}
\]
The late instar larvae increase through the maturation of early instars, maize consumption and inter-stage cannibalism, represented by the terms $\delta_1 L_1$, $a_2 \gamma_2 L_2 M_1$ and $P_1 L_1 L_2$, respectively. It decreases through intra-stage cannibalism, progression into adult moth and natural mortality, represented by $P_2 L_2^2$, $\delta_2 L_2$ and $\mu_3 L_2$, respectively. The following equation represents the dynamics of adult moth stage:
\[
\frac{dA}{dt} = \delta_2 L_2 - \mu_4 A,\tag{6}
\]
which increases through the maturation of late instar larvae, represented by the term $\delta_2 L_2$ and decreases through natural mortality represented by $\mu_4A$. The model explanations above can be represented schematically as shown in Figure \ref{fig:1} below.
\begin{figure}[H]
    \centering
    \includegraphics[width=0.70\linewidth]{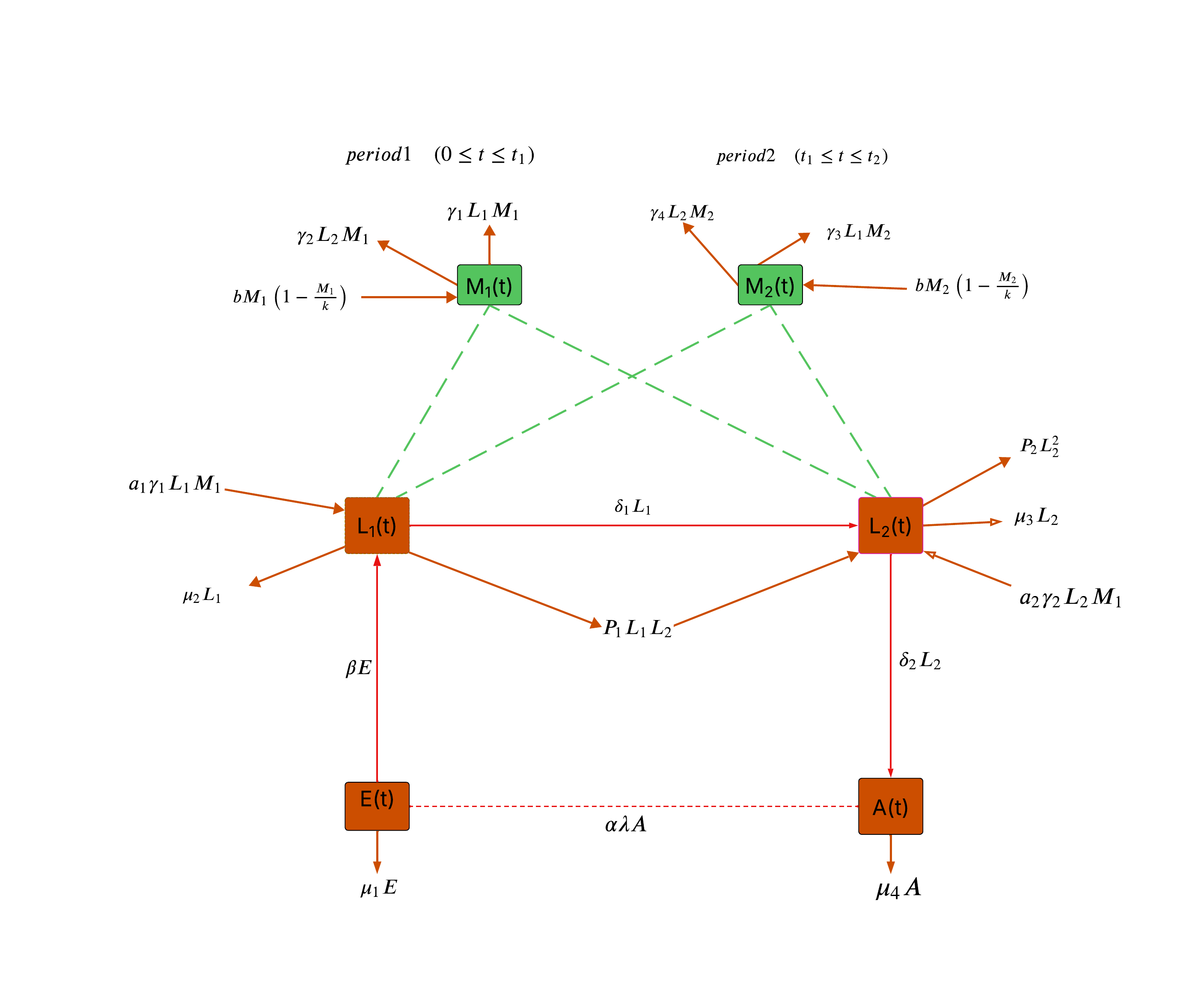}
    \caption{\textbf{Compartmental flow diagram illustrating the stage-structured dynamics of FAW infestation in a maize plantation.}}
    \label{fig:1}
\end{figure}
\subsubsection{Model Equations}
Based on the assumptions provided, the mathematical models governing the vegetative and reproductive stages of maize growth are presented in systems \eqref{equation7} and \eqref{equation8}, respectively and illustrated in Figure \ref{fig:1}. Since the same maize plants that survive in the vegetative stage continue into the reproductive stage, we assume that the maize plant population is continuous at the transition time $t=t_1$. Therefore, the final maize plant population in the vegetative stage serves as the initial maize plant population in the reproductive stage, that is, $M_1(t_1)=M_2(t_1)$. 
Similarly, the terminal values of the egg, early instar larvae, late instar larvae and adult moth populations at the end of the vegetative stage are used as the initial conditions for the reproductive stage, that is, $E(t_1^-)=E(t_1^+),$
$L_1(t_1^-)=L_1(t_1^+),$
$L_2(t_1^-)=L_2(t_1^+),$
$A(t_1^-)=A(t_1^+)$. This assumption reflects the biological interpretation that maize plant population and the FAW population do not disappear and reappear at the stage transition but rather progress continuously from the vegetative growth to the reproductive development. The model in the vegetative stage ($0 \leq t \leq t_1$) is given by

\begin{equation}\label{equation7}
\left\{
\begin{aligned}
\frac{dM_1}{dt} &= bM_1 \left(1 - \frac{M_1}{k} \right) - (\gamma_1 L_1 + \gamma_2 L_2)M_1 ,
\\
\frac{dE}{dt} &= \alpha \lambda A - (\beta + \mu_1) E, 
\\
\frac{dL_1}{dt} &= \beta E + a_1 \gamma_1 L_1 M_1 - \delta_1 L_1 - P_1 L_1 L_2 - \mu_2 L_1,
\\
\frac{dL_2}{dt} &= \delta_1 L_1 + a_2 \gamma_2 L_2 M_1 + P_1 L_1 L_2 - P_2 L_2^2 - \delta_2 L_2 - \mu_3 L_2,
\\
\frac{dA}{dt} &= \delta_2 L_2 - \mu_4 A,
\end{aligned}
\right. \tag{7}
\end{equation}
with initial conditions,
\begin{align*}
M_1(0) = k,\hspace{0.5cm}
E(0) \geq 0, \hspace{0.5cm} 
L_1(0) \geq 0, \hspace{0.5cm} 
L_2(0) \geq 0, \hspace{0.5cm}
A(0) \geq 0 .\hspace{0.5cm} 
\end{align*}
The model in the reproductive stage ($t_1 \leq t \leq t_2$) is 
\begin{equation}\label{equation8}
\left\{
\begin{aligned}
\frac{dM_2}{dt} &= bM_2 \left(1 - \frac{M_2}{k} \right) - (\gamma_3 L_1 + \gamma_4 L_2)M_2, 
\\
\frac{dE}{dt} &= \alpha \lambda A - (\beta + \mu_1) E, 
\\
\frac{dL_1}{dt} &= \beta E + a_3 \gamma_3 L_1 M_2 - \delta_1 L_1 - P_1 L_1 L_2- \mu_2 L_1,
\\
\frac{dL_2}{dt} &= \delta_1 L_1 + a_4 \gamma_4 L_2 M_2 + P_1 L_1 L_2 - P_2 L_2^2 - \delta_2 L_2 - \mu_3 L_2 ,
\\
\frac{dA}{dt} &= \delta_2 L_2 - \mu_4 A,
\end{aligned}
\right. \tag{8}
\end{equation}
with initial conditions,
$M_1(t_1) = M_2(t_1)$  and
$M_2(t) = 0$ for  $t < t_1.$

\subsection{Model Analysis}
\subsubsection{Positivity of Solutions}
For systems \eqref{equation7} and \eqref{equation8} to be biologically meaningful and  mathematically well posed, it is essential to prove that all state variables remain positive for all $t>0$ under their respective initial conditions. This property is established by the following theorem.
\begin{thm}
All solutions of systems \eqref{equation7} and \eqref{equation8} in all two time periods $t>0$ are non-negative and the solution remains in $\mathbb R^5_+$.
\end{thm}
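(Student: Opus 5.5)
Our approach is the standard invariance argument for quasi-positive vector fields. We treat system \eqref{equation7} on $[0,t_1]$ first and then system \eqref{equation8} on $[t_1,t_2]$, using the continuity conditions at $t=t_1$ to chain the two periods. The right-hand sides are polynomial, hence locally Lipschitz, so a unique local solution exists. Non-negativity then follows from two observations: each equation either has the form $\dot x = x\,g(t)$, or satisfies $\dot x\ge -c(t)\,x$ whenever the remaining variables are non-negative.

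\textbf{Maize equations.} We write $\frac{dM_1}{dt}=M_1\,\phi_1(t)$ with $\phi_1(t)=b\left(1-\frac{M_1}{k}\right)-\gamma_1L_1-\gamma_2L_2$, which is continuous along the solution. Integrating gives
\[
M_1(t)=M_1(0)\exp\left(\int_0^t\phi_1(s)\,ds\right)\ge 0,
\]
and in fact $M_1(t)>0$ since $M_1(0)=k>0$. Similarly, $M_2(t)=M_2(t_1)\exp\left(\int_{t_1}^t\phi_2(s)\,ds\right)\ge0$.

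\textbf{FAW compartments.} These are coupled, so we argue by contradiction on a first exit time. Let $\tau=\sup\{t\in[0,t_1]: E,L_1,L_2,A\ge0 \text{ on }[0,t]\}$. On $[0,\tau]$ every source term ($\alpha\lambda A$, $\beta E$, $\delta_1L_1$, $\delta_2L_2$, and the term $P_1L_1L_2$ in the $L_2$-equation) is non-negative, which gives the following inequalities:
\[
\dot E\ge-(\beta+\mu_1)E,\qquad \dot L_1\ge -(\delta_1+\mu_2+P_1L_2-a_1\gamma_1M_1)L_1,
\]
\[
\dot L_2\ge-(P_2L_2+\delta_2+\mu_3-a_2\gamma_2M_1)L_2,\qquad \dot A\ge-\mu_4A.
\]
Each inequality has the form $\dot x\ge -c(t)x$ with $c$ continuous. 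Multiplying by the integrating factor $e^{\int c}$ yields $x(t)\ge x(\tau)e^{-\int_\tau^t c}$. It remains to push non-negativity beyond $\tau$.

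\textbf{Main obstacle: extending past $\tau$.} The delicate point is that the inequalities only hold while the other variables are non-negative. We handle this with the boundary characterization (the tangency condition of Nagumo's theorem). On each face $\{x_i=0\}$ of $\mathbb R^5_+$, with all other coordinates non-negative, the $i$-th component of the vector field is non-negative:
\[
\dot M_1|_{M_1=0}=0,\quad \dot E|_{E=0}=\alpha\lambda A\ge0,\quad \dot L_1|_{L_1=0}=\beta E\ge0,\quad \dot L_2|_{L_2=0}=\delta_1L_1\ge0,\quad \dot A|_{A=0}=\delta_2L_2\ge0.
\]
So the field is quasi-positive, and a standard result (for instance, perturbing initial data to $x_0+\varepsilon\mathbf 1$, showing the solution stays strictly positive, then letting $\varepsilon\to0$ by continuous dependence) gives forward invariance of $\mathbb R^5_+$. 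Hence $\tau=t_1$.

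\textbf{Reproductive stage.} The same argument applies verbatim to \eqref{equation8} on $[t_1,t_2]$. Its initial data at $t_1$ are the terminal values from the first period, which are already non-negative. This shows the solution stays in $\mathbb R^5_+$ for all $t>0$ in both periods. The local solution extends globally, and the boundedness needed to rule out finite-time blow-up is established separately.
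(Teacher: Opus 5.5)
Your proof is correct, and it departs from the paper's at exactly the point where the paper's argument is weak.

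The paper works equation by equation. It discards the terms it regards as non-negative ($\alpha\lambda A$, $\beta E+a_1\gamma_1L_1M_1$, $\delta_1L_1+a_2\gamma_2L_2M_1+P_1L_1L_2$, $\delta_2L_2$). It then integrates the resulting linear or Bernoulli inequalities to obtain bounds such as $A(t)\ge A(0)e^{-\mu_4t}$. This is elementary and yields explicit lower bounds, but as written it has three defects:
\begin{itemize}
\item it presupposes the non-negativity of the other compartments;
\item it treats the time-dependent $L_2(t)$ as a constant in the exponent of the $L_1$ bound;
\item for the maize equation it derives only the logistic \emph{upper} bound, so $M_1\ge0$ is asserted rather than shown.
\end{itemize}

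You share the integrating-factor ingredient, but you differ in three ways. You remove the circularity with the tangency (quasi-positivity) condition on the faces of $\mathbb R^5_+$. You obtain $M_1>0$ from the exact representation $M_1(t)=M_1(0)\exp\int_0^t\phi_1(s)\,ds$. You chain the two periods explicitly through the data at $t_1$. The price is an appeal to local Lipschitz continuity and a standard invariance result. The gain is a non-circular argument that also covers initial data with zero components, such as $E(0)=L_1(0)=L_2(0)=0$ used in the simulations.

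Two small repairs are needed.

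First, in the exit-time paragraph the bound should read $x(t)\ge x(0)e^{-\int_0^t c(s)\,ds}$ on $[0,\tau]$. Its real use is inside your $\varepsilon$-argument: for data $x_0+\varepsilon\mathbf 1$ it gives $x_i(t)\ge (x_i(0)+\varepsilon)e^{-\int_0^t c_i(s)\,ds}>0$ up to the first time any component could vanish, so none does. Say this explicitly, since it is what makes your parenthetical appeal to the perturbation argument self-contained.

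Second, you need not defer global existence to the boundedness theorem. On the orthant $M_1\le k$, and $S=E+L_1+L_2+A$ satisfies $\dot S\le \max\{\alpha\lambda,\,a_1\gamma_1k,\,a_2\gamma_2k\}\,S$ because the $P_1L_1L_2$ terms cancel. This rules out blow-up on $[0,t_1]$, and likewise on $[t_1,t_2]$ with $a_3,\gamma_3,a_4,\gamma_4$ in place of $a_1,\gamma_1,a_2,\gamma_2$ and $M_2(t_1)=M_1(t_1)\le k$.
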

\begin{proof}
To prove this theorem, we let
\begin{align*}
\Omega_1 =\left\{ (M_1,E,L_1,L_2,A)\in \mathbb R^5_+ : M_1(0)=k,E(0)\geq0,L_1(0)\geq0,L_2(0)\geq0,A(0)\geq0\right\}
,\end{align*}
for system \eqref{equation7} and 
\begin{align*}
\Omega_2 =\left\{ (M_2,E,L_1,L_2,A)\in \mathbb R^5_+ : M_2(0)=0,E(0)\geq0,L_1(0)\geq0,L_2(0)\geq0,A(0)\geq0\right\},
\end{align*}
for system \eqref{equation8}. Then, the solution sets $M_1,E,L_1,L_2,A$ and $M_2,E,L_1,L_2,A$ for systems \eqref{equation7} and \eqref{equation8}, respectively are positive for all $t>0$ in the interval $[0,t_2]$.
\begin{case}\label{case 1}
We first establish the positivity of solutions for system \eqref{equation7}, which describes the vegetative stage of maize growth.
\end{case}
From the maize equation, we have
\[
\frac{dM_1}{dt} = bM_1 \left(1 - \frac{M_1}{k} \right) - (\gamma_1 L_1 + \gamma_2 L_2) M_1. 
\] 
Since  \(-(\gamma_1 L_1 + \gamma_2 L_2) M_1 \leq 0
\), it follows that $\frac{dM_1}{dt} \leq bM_1 \left(1 - \frac{M_1}{k} \right)$. Rewriting in terms of $V=\frac{1}{M_1}$ yields,
$V' + bV \geq \frac{b}{k}.$
Solving this inequality results in
\[
M_1(t) \leq \frac{k^2 M_1(0)}{e^{-bt}(k - M_1(0)) + kM_1(0)}.
\]
Hence, as \( t \to \infty \), we obtain \( 0 \leq M_1(t) \leq k \), showing that the maize plant population cannot exceed the carrying capacity $k$.
Similarly, we consider the egg equation
\[
\frac{dE}{dt} = \alpha \lambda A - (\beta  + \mu_1) E.
\]
Since \(\alpha \lambda A \geq 0\), it follows that
$\frac{dE}{dt} \geq -(\beta + \mu_1)E.$ Separating the variables and integrating results in
\[
E(t) \geq C_2 e^{-(\beta + \mu_1)t}.
\]
For the early instar larvae equation, we have
\[
\frac{dL_1}{dt} = \beta E + a_1 \gamma_1 L_1 M_1 - \delta_1 L_1 - P_1L_1 L_2 - \mu_2 L_1.
\]
Since \((\beta E + a_1 \gamma_1 L_1 M_1) \geq 0\), it follows that $\frac{dL_1}{dt} \geq -(\delta_1 + P_1L_2 + \mu_2) L_1.$
Separating the variables gives
\[
\frac{dL_1}{L_1} \geq -(\delta_1 + P_1L_2 + \mu_2) \ dt.
\]
Integrating both sides leads to
\[
 L_1(t) \geq L_1(0) e^{-(\delta_1 + P_1 L_2 + \mu_2)t}.
\]
Likewise, for the late instar larvae equation,
\[
\frac{dL_2}{dt} = \delta_1 L_1 + a_2 \gamma_2 L_2 M_1 + P_1 L_1 L_2 - P_2 L_2^2 - \delta_2 L_2 -\mu_2 L_2.
\]
Since
\(
\delta_1 L_1 + a_2 \gamma_2 L_2 M_1 + P_1 L_1 L_2  \geq 0,
\) it follows that
\[
\frac{dL_2}{dt} + ( \delta_2 +\mu_3 ) L_2 \geq -P_2 L_2^2.
\]
Integrating both sides results in
\[
L_2(t) \geq \frac{L_2(0)(\delta_2 + \mu_3)}{(\delta_2 + \mu_3)\left[(\delta_2 + \mu_3) + P_2 L_2(0)\right] e^{(\delta_2 + \mu_3)t} - L_2(0) P_2}.
\]
Lastly, let us consider the adult moth equation
\[
\frac{dA}{dt} = \delta_2 L_2 - \mu_4 A.
\]
Since \(\delta_2 L_2 \geq 0\), it follows tha
$\frac{dA}{dt} \geq -\mu_4 A.$
Separating the variables and integrating both sides results in
\[
A(t) \geq C_5 e^{-\mu_4 t}.
\]
At \(t = 0\),
\[
A(0) \geq C_5 \Longrightarrow A(t) \geq A(0) e^{-\mu_4 t} \geq 0.
\]
Therefore, all equations in system \eqref{equation7} for $0 \leq t \leq t_1$ are non-negative. Thus the solutions of sets  $\left\{ (M_1,E,L_1,L_2,A)\right\}$
for $t\geq0$ exist in $\Omega_1$ and the state variables of system \eqref{equation7} are non-negative for all time $t>0$.

Applying the same approach used in Case \eqref{case 1},  we can prove that all equations in system (\ref{equation8}) satisfy the non-negativity condition, resulting in \(M_2 \geq 0\), \(E \geq 0, L_1 \geq 0, L_2 \geq 0, A \geq 0\) which exist in \(\Omega_2\). This confirms that the equations representing the reproductive stage are also positive.
\end{proof}

\subsubsection{Invariant Region}
This section focuses on finding a region where  the behaviors of systems \eqref{equation7} and \eqref{equation8} stay within limits and do not grow without bounds. We first establish the invariant region for system \eqref{equation7}, which represents the vegetative stage of maize growth. 

\begin{thm}
Let $Y(t)=\big(M_1(t),E(t),L_1(t),L_2(t),A(t)\big)$
be a unique solution of system \eqref{equation7} with non-negative initial conditions. Then \(Y(t)\) is bounded for all \(t\geq0\). That is, $Y(t)\in\Omega$, where the feasible region \(\Omega\) is defined by
\[
\Omega = \left\{ (M_1, E, L_1, L_2, A) \in \mathbb{R}_+^5 :
\begin{array}{l}
0 \leq M_1(t) \leq k,\; 
0 \leq E(t) \leq E_{\max},\; 
0 \leq L_1(t) \leq L_{1,\max},\; 
0 \leq L_2(t) \leq L_{2,\max}, \\
0 \leq A(t) \leq A_{\max}
\end{array}
\right\}.
\]
\end{thm}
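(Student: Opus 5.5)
The plan is to bound $M_1$ by comparison, exactly as in the positivity theorem, and then to bound the whole FAW population with a single weighted Lyapunov-type total. For that total the weights are chosen so that the conversion terms cancel against maize losses, the cannibalism transfers cancel exactly, and the quadratic term $P_2L_2^2$ absorbs the surplus reproduction. Non-negativity of all components is already provided by the positivity theorem, so only upper bounds are needed.

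\textbf{Step 1 (maize).} By the positivity theorem, $\frac{dM_1}{dt}\le bM_1(1-M_1/k)$. The explicit comparison solution with $M_1(0)=k$ gives $0\le M_1(t)\le k$ for all $t\ge0$.

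\textbf{Step 2 (weighted total).} Let $a=\max(a_1,a_2)$ and fix $\theta>\max\{1,\alpha\lambda/\mu_4\}$. Define
\[
\Phi(t)=aM_1+E+L_1+L_2+\theta A .
\]
Differentiating along \eqref{equation7}, several terms cancel:
\begin{itemize}
\item the transfers $\beta E$, $\delta_1L_1$ and $P_1L_1L_2$ cancel between compartments;
\item the conversion terms $a_i\gamma_iL_iM_1$ are dominated by $-a(\gamma_1L_1+\gamma_2L_2)M_1$, since $a_i\le a$.
\end{itemize}
What remains is
\[
\frac{d\Phi}{dt}\le abM_1\Big(1-\frac{M_1}{k}\Big)-\mu_1E-\mu_2L_1-\mu_3L_2+(\theta-1)\delta_2L_2-P_2L_2^2-(\theta\mu_4-\alpha\lambda)A .
\]
Two elementary estimates handle the positive terms. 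Completing the square gives $(\theta-1)\delta_2L_2-P_2L_2^2\le \frac{(\theta-1)^2\delta_2^2}{4P_2}$. By the choice of $\theta$, the coefficient $\theta\mu_4-\alpha\lambda$ is positive.

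Set $m=\min\{\mu_1,\mu_2,\mu_3,\mu_4-\alpha\lambda/\theta\}>0$. Adding $m\Phi$ to both sides and using $0\le M_1\le k$ from Step 1 gives
\[
\frac{d\Phi}{dt}+m\Phi\le K,\qquad K:=abk\Big(\frac14+\frac{m}{b}\Big)+\frac{(\theta-1)^2\delta_2^2}{4P_2}.
\]

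\textbf{Step 3 (conclusion).} Gronwall's inequality gives $\Phi(t)\le \Phi(0)e^{-mt}+\frac{K}{m}(1-e^{-mt})\le \max\{\Phi(0),K/m\}=:\Phi_{\max}$. Since every summand of $\Phi$ is non-negative, we may take
\[
E_{\max}=L_{1,\max}=L_{2,\max}=\Phi_{\max},\qquad A_{\max}=\Phi_{\max}/\theta .
\]
Together with Step 1, this shows $Y(t)\in\Omega$ for all $t\ge0$. The same argument with $a=\max(a_3,a_4)$ and $\gamma_3,\gamma_4$ covers system \eqref{equation8}.

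\textbf{Main obstacle.} The hard step is that $L_1$ has no self-limiting term. Its growth $a_1\gamma_1L_1M_1$ may exceed $(\delta_1+\mu_2)L_1$, so bounding $L_1$ on its own fails. The weighted combination with $aM_1$ resolves this by converting larval feeding into maize loss.

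The second difficulty is the feedback loop from adults to eggs and on to larvae. The weight $\theta$ on $A$ handles it by trading the term $\alpha\lambda A$ for a linear term in $L_2$, which the intra-stage cannibalism $P_2L_2^2$ then absorbs. This step needs $P_2>0$. If $P_2=0$, I would fall back on a finite-horizon Gronwall bound on $[0,t_2]$, with constants growing exponentially in $t_2$.
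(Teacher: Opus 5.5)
Your argument is correct, and it takes a genuinely different route from the paper's proof.

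\textbf{How the paper argues.} The paper bounds the components one at a time:
\begin{enumerate}
\item $M_1\le k$ by comparison, as you do;
\item $E\le\alpha\lambda A_{\max}/(\beta+\mu_1)$, \emph{assuming} $A\le A_{\max}$;
\item the larval total $L=L_1+L_2$, through a linear inequality $\frac{dL}{dt}\le\beta f+h_3L$ with $h_3=(a_1\gamma_1+a_2\gamma_2)k-h_1$ and $f$ a presumed bound on $E$;
\item $A\le\delta_2L_{2,\max}/\mu_4$, \emph{assuming} $L_2\le L_{2,\max}$.
\end{enumerate}
That chain is circular: $E$ needs $A$, $A$ needs $L_2$, and $L$ needs $E$. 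Moreover, the larval inequality gives a bound only if $h_3<0$. At the baseline values of Table~\ref{table4}, $(a_1\gamma_1+a_2\gamma_2)k\approx0.0141>0.0071=h_1$, so that inequality allows exponential growth.

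\textbf{What your route adds.} Your single weighted total $\Phi=aM_1+E+L_1+L_2+\theta A$ supplies the two ideas the paper's route lacks.
\begin{itemize}
\item The weight $a$ on $M_1$ turns larval feeding gains into maize losses, which are controlled because $0\le M_1\le k$.
\item The weight $\theta$ on $A$, together with $-P_2L_2^2$, closes the loop $A\to E\to L_1\to L_2\to A$ without any a priori assumption.
\end{itemize}

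\textbf{What it costs.} You lose sharpness, since one constant $\Phi_{\max}$ bounds $E$, $L_1$ and $L_2$ alike. You also need hypotheses the paper never states: positive mortality rates and $P_2>0$. You are right that $P_2>0$ is not a technicality. If $P_2=0$ and $\alpha\lambda\beta\delta_1\delta_2>(\beta+\mu_1)(\delta_1+\mu_2)(\delta_2+\mu_3)\mu_4$, the FAW compartments grow exponentially, so no uniform bound can hold.

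\textbf{Recovering the paper's bounds.} Your bound makes the paper's stage-by-stage estimates valid after the fact, without circularity. From $L_2\le\Phi_{\max}$, the linear equations give $A(t)\le\max\{A(0),\delta_2\Phi_{\max}/\mu_4\}$, and then $E(t)\le\max\{E(0),\alpha\lambda A_{\max}/(\beta+\mu_1)\}$.
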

\begin{proof}
    We establish the boundedness of each state variable by analyzing the differential equations.
    From the governing equation for the maize plant population and the comparison argument established in the positivity analysis, it follows that
\[
0 \leq M_1(t) \leq k, \qquad \text{for all } t \geq 0.
\]
Therefore, the maize plant population remains bounded above by the carrying capacity \(k\). Thus,
\[
\left\{ M_1(t) \in \mathbb{R}_+ : 0 \leq M_1(t) \leq k \right\}.
\]
From the egg population, we have
\[
\frac{dE}{dt} = \alpha \lambda A - (\beta + \mu_1)E.
\]
Assuming $A(t)\leq A_{max}$, the egg equation satisfies
\[
\frac{dE}{dt} \leq \alpha \lambda A_{max} - (\beta + \mu_1)E.
\]
Solving the corresponding linear differential inequality yields an upper bound for the egg population given by
\[
\left\{ E(t) \in \mathbb{R}_+ : 0 \leq E(t) \leq \frac{\alpha \lambda A_{\max}}{\beta + \mu_1} \right\},
\]
where $E_{max}=\frac{\alpha \lambda A_{\max}}{\beta + \mu_1}$.
\\
For the larval instars, since \( L_1 \) and \( L_2 \) represent the same biological class, we define the total larval population as
\begin{align*}
    L &= L_1 + L_2, \\
    \frac{dL}{dt} &= \beta E + a_1 \gamma_1 L_1 M_1 -\delta_1 L_1 - P_1 L_1 L_2 - \mu_2 L_1 + \delta_1 L_1 + a_2 \gamma_2 L_2 M_1 + P_1L_1L_2 - P_2L^{2}_2 - \delta_2 L_2 - \mu_3L_2,\\
 \frac{dL}{dt} &= \beta E + a_1 \gamma_1 L_1 M_1 + a_2 \gamma_2 L_2 M_1 - (\mu_2 L_1 + P_2 L^{2}_2 + \delta_2 L_2 + \mu_3 L_2).
\end{align*}
We let $ q_2 = \max \{ M_1 (0), k \} $, $ h_1 = \min \{ \mu_2, P_2, \delta_2, \mu_3 \} $ and 
$ h_2 = \max \{ h(E(0)), f \} .$
Where \( M_1 \leq q_2 \) and \( E \leq h_2 \). It follows that
\[
\frac{dL}{dt} \leq \beta E + a_1 \gamma_1 L_1 k + a_2 \gamma_2 L_2 k - h_1 L,
\]
\[
\frac{dL}{dt} \leq \beta f + (m_1 L_1 + m_2 L_2) - h_1 L,
\]
where $m_1 = a_1 \gamma_1 k$, $m_2 = a_2 \gamma_2 k$ and \( L = \max \{ L_1, L_2 \} \). This further simplifies to
\[
\frac{dL}{dt} \leq \beta f + (m - h_1) L,
\]
\[
\frac{dL}{dt} \leq \beta f + h_3 L,
\]
where $m = m_1 + m_2 $ and $h_3 = m - h_1$.
Introducing the integrating factor and solving yields
\[
L(t) \leq \frac{\beta f}{h_3} + \left( L(0) + \frac{\beta f}{h_3} \right) e^{-h_3 t}.
\]
As \( t \to \infty \), the larval instars satisfies the upper bound
\[
\left\{ L_1(t), L_2(t) \in \mathbb{R}_+^2 : 0 \leq L(t) \leq \frac{\beta f}{h_3} \right\},
\]
where $ L_1,_{max}=\frac{\beta f}{h_3}$ and $L_2,_{max}=\frac{\beta f}{h_3}$.
\\
From the  adult moth population, we have
\[
\frac{dA}{dt} = \delta_2 L_2 - \mu_4 A.
\]
We let \( L_2(t) \leq L_{2,\max} \), where \( L_{2,\max} \) denotes the maximum attainable late instar larval population. It follows that
\[
\frac{dA}{dt} + \mu_4 A \leq \delta_2 L_{2,\max}.
\]
Integrating and solving results in the upper bound
\[
\left\{ A(t) \in \mathbb{R}_+ : 0 \leq A(t) \leq \frac{\delta_2 L_{2,\max}}{\mu_4} \right\},
\]
where $A_{max}=\frac{\delta_2 L_{2,\max}}{\mu_4}$.
\\
Using the same technique,  we can prove that the equations in system \eqref{equation8} for $t_1 \leq t \leq t_2$, with their respective initial conditions have bounded non-negative solutions. This completes the proof.
\end{proof}
\subsubsection{Basic Reproduction Number $(R_0)$}
In ecological predator–prey systems, the basic reproduction number represents the threshold quantity that determines whether a predator population can successfully establish and persist in the presence of its prey \cite{Liu2018Age}. In the proposed FAW maize model, the larval instars act as the feeding population responsible for crop damage, while maize serves as the resource supporting larval development. Consequently, the basic reproduction number measures the ability of the larval instars to maintain positive growth through feeding relative to losses due to natural mortality. Following the ecological predator--prey framework of Korobeinikov \cite{korobeinikov2009stability}, the basic reproduction number is defined as the ratio of predator population gain through feeding to predator loss through natural mortality, both evaluated at the predator-free equilibrium, ($k,0,0,0,0$). Accordingly,
\begin{align}
R_0 = \frac{K \frac{\partial \omega (x_1^0, x_2^0)}{\partial x_2}}{\frac{d\mu(x^0_2)}{dx_2}},\tag{9}
\end{align}
where $K$ represents the feed conversion rate, $\partial \omega (x_1^0, x_2^0)$ represents the prey destruction rate and $d\mu(x^0_2)$ is the mortality rate of the predator in the absence of prey. In the present model, the feed conversion rates are represented by the parameters $a_1$ and $a_2$, corresponding to the early and late instar larvae, respectively. Consequently, the contributions to the basic reproduction number arise from  the early instar larvae and late instar larvae and are computed separately as follows,
\[
R_0 = \frac{a_1 \frac{\partial u_1(M_1^0, L_1^0)}{\partial L_1}}{\frac{d\Phi_1(L^0_1,L^0_2)}{dL_1}} + \frac{a_2 \frac{\partial u_2(M_1^0, L_2^0)}{\partial L_2}}{\frac{d\Phi_2(L^0_1,L^0_2)}{dL_2}}. \tag{10}
\]
The contributions of the early instar larvae and late instar larvae to the  basic reproduction number are computed separately. For the early instar larvae,
\[
R_{0_1} = \frac{a_1 \gamma_1 k}{\mu_2},\tag{11}
\]
while for the late instar larvae,
\[
R_{0_2} = \frac{a_2 \gamma_2 k}{\mu_3}.\tag{12}
\]
Therefore, the overall basic reproduction number is
\[
R_0 = R_{0_1} + R_{0_2} = \frac{a_1 \gamma_1 k}{\mu_2} + \frac{a_2 \gamma_2 k}{\mu_3}.\tag{13}
\]
This expression shows that the  basic reproduction number increases with the larval feeding rates $a_1$ and $a_2$, feed conversion rate $k$ and maize destruction rates $\gamma_1$ and $\gamma_2$, but decreases with the natural mortality rates of the early instar larvae and late instar larvae $\mu_2$ and $\mu_3$, respectively. 
In particular, $R_0<1$ means that larval growth is too low to overcome natural mortality, causing the FAW population to eventually die out. Conversely, $R_0>1$ means that larval growth exceeds natural mortality, allowing the FAW population to survive, establish itself and continue damaging maize plants.
\subsubsection{ Equilibrium Points and their Existence}\label{section2.1.4}
Solving system \eqref{equation7} at steady state yields the following four equilibrium points:\\
\\
\textbf{(a)} The trivial equilibrium point
$\mathcal{E}_1 : (M_1^0, E^0, L_1^0, L_2^0, A^0) = (0,0,0,0,0)$ always exists and represents the complete absence of both maize plants and the FAW population. Biologically, it corresponds to an empty system with neither crop nor pest present.
\\
\\
\textbf{(b)} The non trivial equilibrium point
$\mathcal{E}_2:(M_1^1, E^1, L_1^1, L_2^1, A^1)=(k,0,0,0,0)$ always exists and represents a state where the maize plant population persists while the FAW population is eliminated. Biologically, it indicates successful control of the pest and sustainable maize production.
\\
\\
\textbf{(c)} The maize extinction equilibrium point $\mathcal{E}_3$ \eqref{equation19} represents a state where the maize plant population is completely depleted while the FAW population persists. Biologically, it corresponds to severe infestation resulting in crop failure. It is obtained by setting all derivatives equal to zero and substituting $M_1^3 = 0$ in system \eqref{equation7} to obtain
\begin{equation}\label{equation14}
\left\{
\begin{aligned}
0 &= M_1^3 \\
0 &= \alpha \lambda A^3 - (\beta + \mu_1) E^3 \\
0 &=\beta E^3 - (P_1 L_2^3 + (\delta_1 + \mu_2)) L_1^3 \\
0 &= \delta_1 L_1^3 + (P_1 L_1^3 - P_2 L_2^{3} - (\delta_2 + \mu_3 ))L_2^3 \\
0 &= \delta_2 L_2^3 - \mu_4 A^3.
\end{aligned}
\right. \tag{14}
\end{equation}
We then let $\omega_1 = \alpha \lambda$, $\omega_2 = \beta + \mu_1$, $\omega_3 = \delta_1 + \mu_2$ and $\omega_4 = \delta_2 + \mu_3$. Solving the second and fifth equations of \eqref{equation14} and substituting the resulting expression for $A^3$ into $E^3$ yields
\begin{equation}\label{equation15}
E^3 =  qL_2^3, \tag{15}
\end{equation}
where \( q = \frac{\omega_1 \delta_2}{\omega_2 \mu_4} \).
Substituting equation \eqref{equation15} in the third equation of system  \eqref{equation14} gives
\begin{equation}\label{equation16}
0 = \beta q L_2^3 - \left(P_1 L_2^3 + \omega_3 \right) L_1^3. \tag{16}
\end{equation}
Solving the fourth equation of system \eqref{equation14} for $L_1^{3}$, substituting the resulting expression into equation \eqref{equation16} and simplifying yields the quadratic equation
\[ \tag{17}\label{eq17}
P_1P_2 (L_2^{3})^2 + (P_1 \omega_4 + P_2 \omega_3- \beta q P_1 ) L_2^3  + (\omega_3\omega_4 - \beta q \delta_1)= 0.
\]
Letting
$y_1 = P_1 P_2$, $y_2 = P_1 \omega_4 + P_2 \omega_3 - \beta q P_1$ \text{and} $y_3 = \omega_3 \omega_4 - \beta q\delta_1$ and applying quadratic formula to equation \eqref{eq17} yields
\[ \tag{18}\label{eq18}
L_2^{3} = \frac{-y_2 \pm \sqrt{y_2^2 - 4y_1 y_3}}{2y_1}.
\]
Substituting \eqref{eq18} into the expressions for $E^3$, $L_1^{3}$, $L_2^{3}$ and $A^3$ and replacing with the conventions of $y_1,y_2,y_3$, yields the maize extinction equilibrium point given by

{\tiny
\begin{equation}\label{equation19}
\left\{
\begin{aligned}
M_1^3 &= 0,\\
\\
E^3 &= \alpha\lambda\delta_2\left\{ \frac{-P_1(\delta_2 + \mu_3) - P_2(\delta_1 + \mu_2) + \frac{\beta \alpha \lambda \delta_2}{(\beta + \mu_1)\mu_4} P_1 \pm \sqrt{\left( P_1(\delta_2 + \mu_3) + P_2(\delta_1 + \mu_2) - \frac{\beta \alpha\lambda \delta_2}{(\beta + \mu_1)\mu_4} P_1 \right)^2 - 4P_1P_2 [(\delta_1 + \mu_2)(\delta_2 + \mu_3) - \frac{\beta \alpha\lambda\delta_1 \delta_2}{(\beta + \mu_1)\mu_4}}] }{2(\beta + \mu_1)\mu_4P_1 P_2} \right\}, \\[5pt]
\\
L_1^3 &= \left\{ \frac{ \left( \frac{P_2}{4y^2_1} \left( -y_2 \pm \sqrt{y_2^2 - 4 y_1 y_3} \right) + \omega_4) \left( -y_2 \pm \sqrt{y_2^2 - 4 y_1 y_3} \right) \right) }{ 2 y_1 \left( \delta_1 + P_1 \left( -y_2 \pm \sqrt{y_2^2 - 4 y_1 y_3} \right) \right) }\right\} ,\\[5pt]
\\
L_2^3 &= \left\{ \frac{-P_1(\delta_2 + \mu_3) - P_2(\delta_1 + \mu_2) + \frac{\beta\alpha\lambda \delta_2}{(\beta + \mu_1)\mu_4} P_1 \pm \sqrt{\left( P_1(\delta_2 + \mu_3) + P_2(\delta_1 + \mu_2) - \frac{\beta \alpha\lambda\delta_2}{(\beta + \mu_1)\mu_4} P_1 \right)^2 - 4P_1P_2[(\delta_1 + \mu_2)(\delta_2 + \mu_3) - \frac{\beta \alpha\lambda\delta_1\delta_2}{(\beta + \mu_1)\mu_4}}] }{2P_1 P_2} \right\}, \\[5pt]
\\
A^3 &= \left\{ \frac{\delta_2 (-P_1(\delta_2 + \mu_3) - P_2(\delta_1 + \mu_2) + \frac{\beta \alpha\lambda \delta_2}{(\beta + \mu_1)\mu_4} P_1 \pm \sqrt{\left( P_1(\delta_2 + \mu_3) + P_2(\delta_1 + \mu_2) - \frac{\beta \alpha\lambda \delta_2}{(\beta + \mu_1)\mu_4} P_1 \right)^2 - 4P_1P_2[(\delta_1 + \mu_2)(\delta_2 + \mu_3) - \frac{\beta \alpha\lambda\delta_1 \delta_2}{(\beta + \mu_1)\mu_4}}]}{2P_1 P_2 \mu_4} \right\} .\\[5pt]
\end{aligned}
\right.\tag{19}
\end{equation}
}
Therefore, the equilibrium point $\mathcal{E}_3$ \eqref{equation19} exists and is biologically feasible if   ($P_1(\delta_2 + \mu_3) + P_2(\delta_1 + \mu_2) - \frac{\beta \alpha\lambda\delta_2}{(\beta + \mu_1)\mu_4} P_1 )^2 - 4P_1P_2(\delta_2 + \mu_2)(\delta_2 + \mu_3) - \frac{\beta \alpha\lambda\delta_1\delta_2}{(\beta + \mu_1)\mu_4}>0$ and $P_1P_2>0.$\\
\\
\\
\textbf{(d)} The coexistence equilibrium point $\mathcal{E}_4$ \eqref{eq23} represents a state where both the maize plant population and the FAW population persist at constant levels. Biologically, it reflects a balance between maize growth and FAW infestation. It is obtained by setting all derivatives in system \eqref{equation7} equal to zero to obtain
\begin{equation}\label{equation20}
\left\{
\begin{aligned}
0& = b\left(1 - \frac{M_1^4}{k}\right) - (\gamma_1 L_1^4 + \gamma_2 L_2^4) \\
0& = \omega_1 A^4 - \omega_2 E^4 \\
0 &= \beta E^4 + (a_1 \gamma_1 M_1^4 - P_1 L_2^4 - \omega_3) L_1^4 \\
0& = \delta_1 L_1^4 + (a_2 \gamma_2 M_1^4 + P_1 L_1^4 - P_2 L_2^4 - \omega_4) L_2^4\\ 
0& = \delta_2 L_2^4 - \mu_4 A^4.
\end{aligned}
\right.\tag{20}
\end{equation}
Recalling  equation \eqref{equation15} and substituting the expression for $M_1^{4}$ into the third equation of \eqref{equation20} gives
\begin{equation}\label{equation21}
  L_2^4 = \frac{Z_1 (L_1^{4})^2 - Z_2 L_1^4}{Z_3 - Z_4 L_1^4}. \tag{21}
\end{equation}
Expressing $M_1^{4}$ from the first equation of \eqref{equation20}, substituting into its fourth equation and solving gives
\begin{equation}\label{equation22}
     0=x_1L^4_1+(x_2+x_3L^4_1-x_4L^{4}_2)L^4_2.\tag{22}
\end{equation}
Substituting the expression for \eqref{equation21} into  equation \eqref{equation22} and simplifying gives a cubic equation 
\begin{align*}
L_1^{4} =\ & \sqrt[3]{\frac{-\left(-\frac{\Gamma_3^3}{27} + \frac{\Gamma_3^3}{9} - \frac{\Gamma_3 \Gamma_2}{3} + \Gamma_1\right) + \sqrt{\left(-\frac{\Gamma_3^3}{27} + \frac{\Gamma_3^3}{9} - \frac{\Gamma_3 \Gamma_2}{3} + \Gamma_1\right)^2 + \frac{4}{27}\left(\Gamma_2 - \frac{\Gamma_3^2}{3}\right)^3}}{2}} \\
& - \sqrt[3]{\frac{-\left(-\frac{\Gamma_3^3}{27} + \frac{\Gamma_3^3}{9} - \frac{\Gamma_3 \Gamma_2}{3} + \Gamma_1\right) - \sqrt{\left(-\frac{\Gamma_3^3}{27} + \frac{\Gamma_3^3}{9} - \frac{\Gamma_3 \Gamma_2}{3} + \Gamma_1\right)^2 + \frac{4}{27}\left(\Gamma_2 - \frac{\Gamma_3^2}{3}\right)^3}}{2}} - \frac{\Gamma_3}{3}.
\end{align*}
Where
$\Gamma_3 = \frac{m_2}{m_1},$ 
$\Gamma_2 = \frac{m_3}{m_1},$ 
$\Gamma_1 = \frac{m_4}{m_1}$ and $m_1 = x_1 z_3^2 - x_2 x_3 z_2$, $ m_2 = x_2 z_4 z_2 - x_2 z_3 x_3 - 2x_1 z_3 z_4 - x_4 z_2^{2},$\\
$m_3 = x_1 z_4^{2} + x_3 z_3 z_1 - x_2 z_4 z_1 + x_3 z_4 z_2 + 2x_4 z_1 z_2,$ $m_4 = x_3 x_4 z_1 + x_4 z_1^{2},$ $x_1 = \delta_1b ,$ $x_2 = a_2 \gamma_2 k b - \omega_4 b,$ \\
$x_3 = P_1 b - a_2 k \gamma_2 \gamma_1 ,$ $x_4 = P_2 b + a_2k \gamma_2^2, Z_1 = a_1k \gamma_1^{2},$ 
$Z_2 = a_1 \gamma_1 kb - \omega_3 b,$ 
$Z_3 = \beta q b,$ 
$Z_4 = a_1 \gamma_1\gamma_2 k + P_1 b.$
Let $v_1$ denote its biologically feasible root. Substituting $L_1^{4}=v_1$ into the corresponding expressions for $M_1^4$, $E^4$, $L_1^4$, $L_2^4$ and $A^4$ yields the coexistence equilibrium point 
 
\begin{equation}\label{eq23}
\left\{
\begin{aligned}
M_1^4 &= \frac{kb - k\left(\gamma_1 v_1 + \gamma_2\left(\frac{a_1k \gamma_1^2 v_1^2 - (a_1 \gamma_1 kb - \omega_3 b) v_1}{\beta q b - (a_1 \gamma_1 \gamma_2 k + P_1 b) v_1} \right) \right)}{b} ,\\
E^4 &= \frac{q a_1 k \gamma_1^2 v_1^2 - q v_1 (a_1 \gamma_1 kb - \omega_3 b)}{\beta q b - (a_1 \gamma_1 \gamma_2 k + P_1 b) v_1}, \\
L_1^4 &= v_1 ,\\
L_2^4 &= \frac{a_1 k \gamma_1^2 v_1^2 - v_1 (a_1 \gamma_1 kb - \omega_3 b)}{\beta q b - (a_1 \gamma_1 \gamma_2 k + P_1 b) v_1} ,\\
A^4 &= \frac{\delta_2 a_1 k \gamma_1^2 v_1^2 - \delta_2 v_1 (a_1 \gamma_1 kb - \omega_3 b)}{[\beta q b - (a_1 \gamma_1 \gamma_2 k + P_1 b) v_1] \mu_4}.
\end{aligned}
\right.\tag{23}
\end{equation}    
\\
Therefore, the equilibrium point $\mathcal{E}_4$  \eqref{eq23} exists and is biologically feasible if $\beta q b > (a_1 \gamma_1 \gamma_2 k + P_1 b) v_1$ and $v_1>0.$
\subsubsection{Local  Stability of Equilibrium Points}

In this section, we investigate the local stability of the four equilibrium points of system \eqref{equation7} using the Jacobian matrix \eqref{equation24}  . The system is first linearized about each equilibrium point and the corresponding eigenvalues are then used to determine whether the equilibrium point is locally asymptotically stable or unstable. An equilibrium point is locally asymptotically stable if all eigenvalues of the Jacobian matrix evaluated at that point have negative real parts, otherwise, the equilibrium point is unstable \cite{Hunaish2025}. The Jacobian matrix of system \eqref{equation7} is given by

{\scriptsize
\[
J =
\left|
\begin{array}{cccccccc}\label{equation24}
b -\frac{ 2bM_1}{k} - (\gamma_1L_1 + \gamma_2L_2) & 0 & -\gamma_1M_1 & -\gamma_2M_1 & 0 \\
0 & -(\beta + \mu_1) & 0 & 0 & \alpha \lambda \\
a_1\gamma_1L_1 & \beta & (a_1\gamma_1M_1 - \delta_1 - P_1L_2 - \mu_2) & -P_1L_1 & 0 \\
a_2\gamma_2L_2 & 0 & (\delta_1 + P_1L_2) & (a_2\gamma_2M_1 + P_1L_1 - 2P_2L_2 - \delta_2 - \mu_3) & 0 \\
0 & 0 & 0 & \delta_2 & -\mu_4 \\
\end{array}
\right|.\tag{24}
\]
}
\begin{thm}\label{theorem3}
    The trivial equilibrium point $\mathcal{E}_1$ is locally unstable whenever $b>0$.
\end{thm}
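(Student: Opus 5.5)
We evaluate the Jacobian matrix \eqref{equation24} at $\mathcal{E}_1=(0,0,0,0,0)$ and read off the eigenvalues. At this point every term involving $M_1$, $L_1$ or $L_2$ vanishes. The first row and first column reduce to the single diagonal entry $b$, with all off-diagonal entries zero, since $-\gamma_1M_1=-\gamma_2M_1=0$ and $a_1\gamma_1L_1=a_2\gamma_2L_2=0$. Hence $J(\mathcal{E}_1)$ is block diagonal, with a $1\times1$ block $b$ and a $4\times4$ block acting on $(E,L_1,L_2,A)$:
\[
J_{FAW}=\begin{pmatrix} -(\beta+\mu_1) & 0 & 0 & \alpha\lambda\\ \beta & -(\delta_1+\mu_2) & 0 & 0\\ 0 & \delta_1 & -(\delta_2+\mu_3) & 0\\ 0&0&\delta_2&-\mu_4\end{pmatrix}.
\]

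The first step is the observation that $\lambda_1=b$ is an eigenvalue of $J(\mathcal{E}_1)$. We can exhibit it explicitly through the eigenvector $(1,0,0,0,0)^T$, or by expanding the characteristic determinant along the first row. Since $b>0$, this eigenvalue has positive real part. By the linearization criterion recalled before \eqref{equation24}, an equilibrium with an eigenvalue of positive real part is unstable, and this already yields the theorem.

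For completeness we may also describe the remaining spectrum. The characteristic polynomial of $J_{FAW}$ is $(\lambda+\omega_2)(\lambda+\omega_3)(\lambda+\omega_4)(\lambda+\mu_4)-\omega_1\beta\delta_1\delta_2$, so its roots depend on the sign of $\omega_2\omega_3\omega_4\mu_4-\alpha\lambda\beta\delta_1\delta_2$. These roots are not needed for the conclusion, because one unstable direction suffices.

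The only step that needs care is confirming that the first row and column really decouple at the origin, so that $b$ is genuinely an eigenvalue; this is immediate from the zero entries. Biologically, the unstable direction corresponds to maize introduced in small numbers growing logistically away from the empty state.
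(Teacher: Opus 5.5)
Your proof is correct and follows essentially the same route as the paper: you evaluate the Jacobian at $\mathcal{E}_1$, note that the decoupled first row and column give the eigenvalue $b>0$ (your eigenvector $(1,0,0,0,0)^T$ makes this explicit), and conclude instability without needing the remaining four eigenvalues. Your quartic factor $(x+\omega_2)(x+\omega_3)(x+\omega_4)(x+\mu_4)-\alpha\lambda\beta\delta_1\delta_2$ is also correct, and it places the feedback term in the constant coefficient rather than in $c_3$ as the paper's equation (27) does, although this has no bearing on the theorem.
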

\begin{proof}
        The jacobian of system \eqref{equation24} associated with the equilibrium $\mathcal{E}_1$  is
\[\label{equation25}
J(\mathcal{E}_1) = 
\left|
\begin{array}{cccccc}
b & 0 & 0 & 0 & 0 \\
0 & -(\beta + \mu_1) & 0 & 0 & \alpha \lambda \\
0 & \beta & -(\delta_1 + \mu_2) & 0 & 0 \\
0 & 0 & \delta_1 & -(\delta_2 + \mu_3) & 0 \\
0 & 0 & 0 & \delta_2 & -\mu_4
\end{array}\tag{25}
\right|.
\]
It is immediate that one of the eigenvalues of \eqref{equation25} is $b>0$. The remaining four eigenvalues are obtained from the characteristic equation 
\begin{equation}\label{equation26}
\begin{aligned}
x^4 + c_1 x^3 + c_2 x^2 + c_3 x + c_4 = 0,
\end{aligned}\tag{26}
\end{equation}
with
\[
\left\{
\begin{aligned}\label{eqtn27}
c_1 &= m_1 + m_2 + m_3 + m_4, \\
c_2 &= m_1 m_2 + m_3 m_4 + (m_1 + m_2)(m_3 + m_4), \\
c_3 &= m_1 m_2 (m_3 + m_4) + m_3 m_4 (m_1 + m_2) - \alpha \lambda \beta \delta_1 \delta_2, \\
c_4 &= m_1 m_2 m_3 m_4,
\end{aligned}\tag{27}
\right.
\]
where 
$m_1 = \beta + \mu_1$,
$m_2 = \delta_1 + \mu_2,$
$m_3 = \delta_2 + \mu_3$ and 
$m_4 = \mu_4.$
Applying the Routh–Hurwitz criteria as presented in \cite{Allen2007}, the equilibrium point $\mathcal{E}_1$ is locally asymptotically stable if 
\begin{equation}\label{equation28}
\left\{
\begin{aligned}
H_1 &:\quad c_1>0,\; c_2>0,\; c_3>0,\; c_4>0,\\
H_2 &:\quad c_1c_2-c_3>0,\\
H_3 &:\quad c_1c_2c_3-c_1^2c_4-c_3^2>0.
\end{aligned}
\right.
\tag{28}
\end{equation}
It is evident that $c_1$, $c_2$ and $c_4$ are positive, while the positivity of $c_3$ depends on the model parameters. However, since one eigenvalue of the Jacobian matrix \eqref{equation24} is b, it is always positive. Therefore, we do not investigate the remaining Routh–Hurwitz conditions and hence we conclude that the trivial equilibrium point $\mathcal{E}_1$ is locally unstable. This completes the proof.
\end{proof}
\begin{thm}
     Suppose that $b>0$. If the coefficients $y_1$, $y_2$, $y_3$ and $y_4$ defined in \eqref{equation31} satisfy the Routh--Hurwitz conditions in \eqref{equation32}, then the equilibrium point $\mathcal{E}_2$ is locally asymptotically stable; otherwise, it is unstable.
\end{thm}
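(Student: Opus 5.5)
We linearize system \eqref{equation7} at $\mathcal{E}_2=(k,0,0,0,0)$ by substituting $M_1=k$ and $E=L_1=L_2=A=0$ into the Jacobian \eqref{equation24}. The first row becomes $\bigl(b-2b-0,\,0,\,-\gamma_1k,\,-\gamma_2k,\,0\bigr)=\bigl(-b,0,-\gamma_1k,-\gamma_2k,0\bigr)$. The first column below the diagonal vanishes, because its entries $a_1\gamma_1L_1$ and $a_2\gamma_2L_2$ are zero there. Hence $J(\mathcal{E}_2)$ is block upper triangular. One eigenvalue is $-b$, which is negative whenever $b>0$. The remaining four eigenvalues are those of the $4\times4$ FAW block
\[
B=\begin{pmatrix}
-(\beta+\mu_1) & 0 & 0 & \alpha\lambda\\
\beta & a_1\gamma_1k-\delta_1-\mu_2 & 0 & 0\\
0 & \delta_1 & a_2\gamma_2k-\delta_2-\mu_3 & 0\\
0 & 0 & \delta_2 & -\mu_4
\end{pmatrix}.
\]
The cannibalism terms $P_1L_2$, $P_1L_1$ and $2P_2L_2$ all drop out at this point.

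Next we compute the characteristic polynomial of $B$. Set $n_1=\beta+\mu_1$, $n_2=\delta_1+\mu_2-a_1\gamma_1k$, $n_3=\delta_2+\mu_3-a_2\gamma_2k$ and $n_4=\mu_4$. The matrix $B$ is a cyclic chain $E\to L_1\to L_2\to A\to E$ with diagonal entries $-n_i$. Expanding along the first row therefore gives
\[
\det(xI-B)=(x+n_1)(x+n_2)(x+n_3)(x+n_4)-\alpha\lambda\beta\delta_1\delta_2 .
\]
Writing this as $x^4+y_1x^3+y_2x^2+y_3x+y_4$ yields the coefficients we take as \eqref{equation31}:
\[
y_1=\textstyle\sum n_i,\quad y_2=\sum_{i<j}n_in_j,\quad y_3=\sum_{i<j<l}n_in_jn_l,\quad y_4=n_1n_2n_3n_4-\alpha\lambda\beta\delta_1\delta_2 .
\]
Only $y_4$ carries the cycle term. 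This differs from \eqref{eqtn27}, where the analogous term appeared in $c_3$; we would double-check this carefully, since the single closed loop has length four and so contributes only to the constant term.

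The stability claim then follows from the Routh--Hurwitz criterion for quartics as in \cite{Allen2007}. All roots of the quartic have negative real parts if and only if \eqref{equation32} holds:
\[
y_1>0,\quad y_3>0,\quad y_4>0,\quad y_1y_2y_3-y_3^2-y_1^2y_4>0 .
\]
These are equivalent to the form with $H_1$--$H_3$ used in \eqref{equation28}. Combined with the eigenvalue $-b<0$, these conditions give that every eigenvalue of $J(\mathcal{E}_2)$ has negative real part, so $\mathcal{E}_2$ is locally asymptotically stable.

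For the converse ("otherwise unstable") we use the fact that the Routh--Hurwitz conditions are necessary as well as sufficient. If any of them fails, some root of the quartic has nonnegative real part. The main obstacle is the borderline case where a Hurwitz determinant vanishes, which gives a root on the imaginary axis. There linearization only yields non-hyperbolicity, not instability. We would handle this either by reading "unstable" as "not asymptotically stable by linearization", or by restricting to strict failure, where a root with positive real part exists. In the latter case instability follows from the linearization theorem. As a remark, $y_4>0$ is equivalent to $\frac{\alpha\lambda\beta\delta_1\delta_2}{n_1n_2n_3n_4}<1$ when each $n_i>0$. Each $n_i>0$ is implied by $R_{0_1}<1$ and $R_{0_2}<1$ when $\delta_1$ and $\delta_2$ are small, which links the condition to $R_0$ in \eqref{equation13}.
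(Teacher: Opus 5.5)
Your argument follows the paper's route: the block-triangular $J(\mathcal{E}_2)$, the eigenvalue $-b$, and Routh--Hurwitz applied to the quartic of the FAW block. The one substantive difference is in your favour. Your characteristic polynomial $\prod_{i=1}^{4}(x+n_i)-\alpha\lambda\beta\delta_1\delta_2$ is correct, whereas the paper's \eqref{equation31} subtracts $\alpha\lambda\beta\delta_1\delta_2$ in $y_3$ as well as in $y_4$. That is wrong for exactly the reason you give: the only permutation using the entry $\alpha\lambda$ is the $4$-cycle, which contains no diagonal entry and so reaches only the constant term. The corresponding slip in \eqref{eqtn27} (cycle term in $c_3$ instead of $c_4$) is harmless there only because $b>0$ already forces instability.

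So when you ``take'' your coefficients as \eqref{equation31}, you are proving the corrected theorem, and the correction matters. With the printed $y_3$, the ``otherwise unstable'' clause fails. For example, $\beta=\delta_1=\delta_2=\mu_4=10$, $\alpha\lambda=5$ and $a_1=a_2=\mu_1=\mu_2=\mu_3=0$ give $n_1=\dots=n_4=10$ and a printed $y_3=4000-5000<0$. Yet the true roots $-10+5000^{1/4}\omega$ ($\omega^4=1$) all have negative real part, so $\mathcal{E}_2$ is locally asymptotically stable. Your caveat that the converse only yields instability when some root has strictly positive real part is also warranted; the paper does not address the converse at all.
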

\begin{proof}
    Corresponding to the equilibrium point $\mathcal{E}_2$, the Jacobian of system \eqref{equation24} is
\[
J (\mathcal{E}_2)=
\left|
\begin{array}{cccccccc}\label{equation29}
-b & 0 & -\gamma_1 k & -\gamma_2 k & 0 \\
0 & -(\beta + \mu_1) & 0 & 0 & \alpha \lambda \\
0 & \beta & (a_1 \gamma_1 k - \delta_1 - \mu_2) & 0 & 0 \\
0 & 0 & \delta_1 & (a_2 \gamma_2 k - \delta_2 - \mu_3) & 0 \\
0 & 0 & 0 & \delta_2 & -\mu_4 \\
\end{array}
\right|.\tag{29}
\]
We can observe that one of the eigenvalues of  \eqref{equation29} is  $-b < 0$. The other eigenvalues are obtained from the characteristic equation
\begin{equation}
    \begin{aligned}\label{equation30}
        x^4 + y_1 x^3 + y_2 x^2 + y_3 x + y_4 = 0,
    \end{aligned}\tag{30}
\end{equation}
with
\begin{equation}
\left\{
\begin{aligned}\label{equation31}
y_1 &= m_1 + m_2 + m_3 + m_4, \\
y_2 &= m_1m_2+(m_1 + m_2)(m_3 + m_4) + m_3 m_4, \\
y_3 &= m_1\left[m_2(m_3 + m_4) + m_3 m_4\right] + m_2 m_3 m_4 - \alpha \lambda \beta \delta_1 \delta_2, \\
y_4 &= m_1 m_2 m_3 m_4 - \alpha \lambda \beta \delta_1 \delta_2,
\end{aligned}\tag{31}
\right.
\end{equation}
where
$m_1 = \beta + \mu_1,$ 
$m_2 = \delta_1 + \mu_2 - a_1 \gamma_1 k,$
$m_3 = \delta_2 + \mu_3 - a_2 \gamma_2 k$ and 
$m_4 = \mu_4.$
The Routh–Hurwitz criteria for the local asymptotic stability of $\mathcal{E}_2$ require that 
\begin{equation}\label{equation32}
\left\{
\begin{aligned}
H_1 &:\quad y_1>0,\; y_2>0,\; y_3>0,\; y_4>0,\\
H_2 &:\quad y_1y_2-y_3>0,\\
H_3 &:\quad y_1y_2y_3-y_1^2y_4-y_3^2>0.
\end{aligned}
\right.
\tag{32}
\end{equation}
If all conditions in \eqref{equation32} are satisfied, then all the eigenvalues of the Jacobian matrix have negative real parts. Hence, the equilibrium point $\mathcal{E}_2$ is locally asymptotically stable. This completes the proof.
\end{proof}
\begin{thm}
     The maize extinction equilibrium point $\mathcal{E}_3$ is locally asymptotically stable whenever $b <(\gamma_1 L_1^3 + \gamma_2 L_2^3)$ and conditions in \eqref{equation36} hold; otherwise, it is unstable.
\end{thm}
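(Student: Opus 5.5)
Evaluate the Jacobian \eqref{equation24} at $\mathcal{E}_3=(0,E^3,L_1^3,L_2^3,A^3)$, exactly as was done for $\mathcal{E}_1$ and $\mathcal{E}_2$. Since $M_1^3=0$, the first row reduces to
\[
\bigl(b-(\gamma_1L_1^3+\gamma_2L_2^3),\,0,\,0,\,0,\,0\bigr),
\]
because the entries $-\gamma_1M_1$ and $-\gamma_2M_1$ vanish. Expanding the determinant $\det(J(\mathcal{E}_3)-xI)$ along this first row factors the characteristic polynomial as
\[
\bigl(b-(\gamma_1L_1^3+\gamma_2L_2^3)-x\bigr)\,\det\bigl(\widetilde J-xI\bigr),
\]
where $\widetilde J$ is the $4\times4$ lower-right block in the variables $(E,L_1,L_2,A)$. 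Hence one eigenvalue is $x_1=b-(\gamma_1L_1^3+\gamma_2L_2^3)$, and it is negative precisely when $b<\gamma_1L_1^3+\gamma_2L_2^3$. This gives the first hypothesis of the theorem and explains its role: if the inequality is reversed, $x_1>0$ and $\mathcal{E}_3$ is unstable, since maize can reinvade at low density.

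The block $\widetilde J$, with $M_1=0$, is
\[
\widetilde J=\begin{pmatrix}
-\omega_2 & 0 & 0 & \omega_1\\
\beta & -n_2 & -P_1L_1^3 & 0\\
0 & \delta_1+P_1L_2^3 & -n_3 & 0\\
0 & 0 & \delta_2 & -\mu_4
\end{pmatrix},
\]
where $n_2=\omega_3+P_1L_2^3$ and $n_3=\omega_4+2P_2L_2^3-P_1L_1^3$. Its characteristic polynomial $x^4+z_1x^3+z_2x^2+z_3x+z_4$ can be computed by expanding along the first row, in the same way \eqref{equation27} and \eqref{equation31} were obtained. The expected coefficients are
\begin{align*}
z_1&=\omega_2+n_2+n_3+\mu_4,\\
z_2&=\omega_2n_2+(\omega_2+n_2)(n_3+\mu_4)+n_3\mu_4+P_1L_1^3(\delta_1+P_1L_2^3),\\
z_3&=\omega_2n_2(n_3+\mu_4)+n_2n_3\mu_4+P_1L_1^3(\delta_1+P_1L_2^3)(\omega_2+\mu_4)+\omega_2n_3\mu_4-\omega_1\beta\delta_2(\delta_1+P_1L_2^3),\\
z_4&=\omega_2\mu_4\bigl[n_2n_3+P_1L_1^3(\delta_1+P_1L_2^3)\bigr]-\omega_1\beta\delta_2(\delta_1+P_1L_2^3).
\end{align*}
The negative terms in $z_3$ and $z_4$ come from the cofactor of $\omega_1$, which is $\beta\,\delta_2(\delta_1+P_1L_2^3)$ up to sign.

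These coefficients would be labelled as in \eqref{equation36}, and the Routh--Hurwitz conditions written in the form of \eqref{equation28}/\eqref{equation32}: $z_i>0$ for all $i$, $z_1z_2-z_3>0$, and $z_1z_2z_3-z_1^2z_4-z_3^2>0$. By the Routh--Hurwitz criterion \cite{Allen2007}, these conditions are equivalent to all four roots of the quartic having negative real parts. Together with $x_1<0$, this gives local asymptotic stability of $\mathcal{E}_3$. Conversely, if either $b>\gamma_1L_1^3+\gamma_2L_2^3$ or one of the Routh--Hurwitz conditions fails, some eigenvalue has nonnegative real part. Strictly, Routh--Hurwitz failure with a zero eigenvalue is a degenerate case; I would note this and otherwise state instability, as the paper does for $\mathcal{E}_2$.

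The main obstacle is computing the quartic correctly, in particular the cross terms coming from the off-diagonal entries $-P_1L_1^3$ and $\delta_1+P_1L_2^3$ and from the feedback loop $E\to L_1\to L_2\to A\to E$. One simplification is available: at equilibrium the third equation of \eqref{equation14} gives $n_2L_1^3=\beta E^3$, and the fourth gives $\delta_1L_1^3=(\omega_4+P_2L_2^3-P_1L_1^3)L_2^3$. These identities can be used to simplify $z_4$ and to check its sign. In practice, however, the statement only requires expressing the conditions in \eqref{equation36}, not verifying them in closed form, so I would stop once the coefficients are written explicitly.
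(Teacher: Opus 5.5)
Your route is the paper's. With $M_1^3=0$, the first row of $J(\mathcal{E}_3)$ reduces to its diagonal entry, which splits off the eigenvalue $b-(\gamma_1L_1^3+\gamma_2L_2^3)$. The other four eigenvalues are then handled by Routh--Hurwitz on the characteristic quartic of the $(E,L_1,L_2,A)$ block. Your $\widetilde J$ is correct, and your $z_1$, $z_2$, $z_4$ agree with $q_1$, $q_2$, $q_4$ in (35).

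\textbf{The gap is in $z_3$.} It carries the spurious term $-\omega_1\beta\delta_2(\delta_1+P_1L_2^3)$. Deleting row 1 and column 4 of $xI-\widetilde J$ leaves an upper triangular $3\times3$ minor with diagonal $-\beta$, $-(\delta_1+P_1L_2^3)$, $-\delta_2$. So the cofactor of $\omega_1$ contains no $x$ at all. Equivalently, the loop $E\to L_1\to L_2\to A\to E$ passes through all four variables of the block and can only enter the determinant. In fact
\[
\det(xI-\widetilde J)=(x+\omega_2)(x+\mu_4)\bigl[(x+n_2)(x+n_3)+P_1L_1^3(\delta_1+P_1L_2^3)\bigr]-\omega_1\beta\delta_2(\delta_1+P_1L_2^3).
\]
The correct $x$-coefficient is therefore your $z_3$ with its last term deleted, which is exactly $q_3$ in (35).

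This is not cosmetic, because Routh--Hurwitz is an exact criterion on the coefficients. With your $z_3$, the inequalities $z_3>0$, $z_1z_2-z_3>0$ and $z_1z_2z_3-z_1^2z_4-z_3^2>0$ are the Hurwitz conditions of a different quartic. They are therefore not equivalent to all roots of the true characteristic polynomial having negative real parts. Nor do they coincide with the conditions \eqref{equation36} that the theorem invokes.

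You appear to have followed the template of (31), where $-\alpha\lambda\beta\delta_1\delta_2$ also appears in $y_3$. It does not belong there either, nor in $c_3$ of (27), for the same reason. Expression (35) places it correctly, only in the constant term. Once you remove the extra term from $z_3$, your argument coincides with the paper's proof.
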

\begin{proof}
    Evaluating the Jacobian matrix \eqref{equation24} about the equilibrium point $\mathcal{E}_3$ yields
{\scriptsize
\begin{align*}\label{equation33}
J(\mathcal{E}_3) =
\begin{bmatrix}
b - (\gamma_1 L_1^3 + \gamma_2 L_2^3) & 0 & 0 & 0 & 0 \\
0 & -(\beta + \mu_1) & 0 & 0 & \alpha\lambda \\
a_1 \gamma_1 L_1^3 & \beta & -(\delta_1 + P_1 L_2^{3} + \mu_2) & -P_1 L_1^3 & 0 \\
a_2 \gamma_2 L_2^3 & 0 & \delta_1 + P_1 L_2^{3} & (P_1 L_1^3 - 2 P_2 L_2^{3} - \delta_2 - \mu_3) & 0 \\
0 & 0 & 0 & \delta_2 & -\mu_4 \\
\end{bmatrix}.\tag{33}
\end{align*}
}  
It is immediate that one of the eigenvalue of \eqref{equation33} is  $b - (\gamma_1 L_1^3 + \gamma_2 L_2^3)$. The other eigenvalues are obtained from the characteristic equation
\begin{equation}\label{eq34}
\begin{aligned}
x^{4} + q_1 x^{3}+ q_2 x^{2} + q_3 x + q_4 = 0,    
\end{aligned}\tag{34}
\end{equation}
with
\[
\left\{
\begin{aligned}\label{35}
q_1 &= m_1 + m_2 + m_3 + m_4, \\
q_2 &= m_1 m_2 + m_1 m_3 + m_1 m_4 
     + m_2 m_3 + m_2 m_4 + m_3 m_4+n_3n_5, \\
q_3 &= m_1 m_2 m_3 + m_1 m_2 m_4 
     + m_1 m_3 m_4 + m_2 m_3 m_4+n_3n_5(m_1+m_4), \\
q_4 &= m_1 m_2 m_3 m_4+m_1m_4n_3n_5 - n_1 n_2 n_3 n_4,
\end{aligned}\tag{35}
\right.
\]
where
$m_1 = \beta + \mu_1$,
$m_2 = \delta_1 + P_1 L_2^3 + \mu_2,$
$m_3 = \delta_2 + \mu_3 + 2P_2 L_2^3 - P_1 L_1^3,$
$m_4 = \mu_4,$
$n_1 = \beta,$
$n_2 = \alpha\lambda,$
$n_3 = \delta_1 + P_1 L_2^3,$
$n_4 = \delta_2$ and
$n_5=P_1L^3_1$.
The Routh–Hurwitz criteria for the local asymptotic stability of $\mathcal{E}_3$ require that 
\begin{equation}\label{equation36}
\left\{
\begin{aligned}
H_1 &:\quad q_1>0,\; q_2>0,\; q_3>0,\; q_4>0,\\
H_2 &:\quad q_1q_2-q_3>0,\\
H_3 &:\quad q_1q_2q_3-q_1^2q_4-q_3^2>0.
\end{aligned}
\right.
\tag{36}
\end{equation}
 The eigenvalue $b -(\gamma_1 L_1^3 + \gamma_2 L_2^3)$ is negative whenever $b < (\gamma_1 L_1^3 + \gamma_2 L_2^3)$. Furthermore, if the conditions in \eqref{equation36} hold, then all the roots in \eqref{eq34} have negative real parts. Hence, all eigenvalues of the jacobian matrix \eqref{equation33} have negative real parts and therefore $\mathcal{E}_3$ is locally asymptotically stable. This completes the proof.
\end{proof}   
\begin{thm}\label{theorem6}
    The maize co-existence equilibrium point $\mathcal{E}_4$ is locally asymptotically stable if conditions \eqref{equation40} are satisfied; otherwise, it is unstable.
\end{thm}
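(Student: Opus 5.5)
The plan follows the same pattern as the proofs for $\mathcal{E}_1$, $\mathcal{E}_2$ and $\mathcal{E}_3$: linearize at $\mathcal{E}_4$, compute the characteristic polynomial, and apply the Routh--Hurwitz criteria from \cite{Allen2007}. The difference is that no eigenvalue decouples here, so the polynomial has degree five.

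\textbf{Simplifying the Jacobian.} I evaluate the Jacobian \eqref{equation24} at $\mathcal{E}_4=(M_1^4,E^4,L_1^4,L_2^4,A^4)$ from \eqref{eq23} and use the equilibrium relations \eqref{equation20} to simplify the diagonal.

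\begin{itemize}
\item Since $b(1-M_1^4/k)=\gamma_1L_1^4+\gamma_2L_2^4$, the $(1,1)$ entry reduces to $-bM_1^4/k<0$.
\item From the third equation of \eqref{equation20}, $a_1\gamma_1M_1^4-\delta_1-P_1L_2^4-\mu_2=-\beta E^4/L_1^4$.
\item From the fourth equation, $a_2\gamma_2M_1^4+P_1L_1^4-P_2L_2^4-\delta_2-\mu_3=-\delta_1L_1^4/L_2^4$. Hence the $(4,4)$ entry equals $-\delta_1L_1^4/L_2^4-P_2L_2^4$.
\end{itemize}

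With the abbreviations $m_1=bM_1^4/k$, $m_2=\beta+\mu_1$, $m_3=\beta E^4/L_1^4$, $m_4=\delta_1L_1^4/L_2^4+P_2L_2^4$ and $m_5=\mu_4$, every diagonal entry of $J(\mathcal{E}_4)$ is $-m_i$ with $m_i>0$, provided $\mathcal{E}_4$ is feasible ($v_1>0$ and the condition stated after \eqref{eq23}).

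\textbf{Characteristic polynomial.} I expand $\det(xI-J(\mathcal{E}_4))$ along the first column, which has only the three nonzero entries $-m_1$, $a_1\gamma_1L_1^4$ and $a_2\gamma_2L_2^4$. This gives
\[
x^5+r_1x^4+r_2x^3+r_3x^2+r_4x+r_5=0.
\]
Each $r_i$ consists of the elementary symmetric sums of the $m_i$, plus three kinds of correction terms:
\begin{itemize}
\item the off-diagonal larval coupling $P_1L_1^4(\delta_1+P_1L_2^4)$;
\item the feeding feedback terms $\gamma_1M_1^4\,a_1\gamma_1L_1^4$ and $\gamma_2M_1^4\,a_2\gamma_2L_2^4$, both of which enter with a positive sign, so they help stability;
\item the reproductive loop $-\alpha\lambda\beta\delta_2(\delta_1+P_1L_2^4)$, which enters with a negative sign, as in $y_3,y_4$ of \eqref{equation31}.
\end{itemize}
I will write $r_1,\dots,r_5$ out explicitly in these abbreviations. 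The feeding terms also produce the cross term $\gamma_1M_1^4a_2\gamma_2L_2^4(\delta_1+P_1L_2^4)$, and I have to track it carefully.

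\textbf{Stability conditions \eqref{equation40}.} For a quintic, the Routh--Hurwitz conditions are
\[
r_i>0\ (i=1,\dots,5),\qquad r_1r_2-r_3>0,
\]
\[
(r_1r_2-r_3)(r_3r_4-r_2r_5)-(r_1r_4-r_5)^2>0.
\]
These will serve as \eqref{equation40}. When they hold, all five eigenvalues have negative real parts and $\mathcal{E}_4$ is locally asymptotically stable. When one fails, the same criterion places a root in the closed right half-plane, which gives the instability claim. The non-degenerate failure case gives an eigenvalue with positive real part. For the boundary case I will simply note that asymptotic stability is lost, in line with how the paper treats ``unstable'' in the earlier theorems.

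\textbf{Main obstacle.} The hard step is the bookkeeping of the determinant expansion. The Jacobian is genuinely coupled: the maize row interacts with both larval rows, and the larval block interacts with the egg–adult loop. Sign errors in $r_4$ and $r_5$ are therefore easy to make. My approach is to compute the $4\times4$ minor for the FAW block first. It has the same structure as \eqref{eq34}, with $m_3,m_4$ replaced as above, so I can reuse that computation. I then add the rank-two perturbation coming from the maize row and column using the Schur-complement identity
\[
\det(xI-J)=(x+m_1)\det(xI-J_{\rm FAW})+\text{(feeding terms)}.
\]
I will not attempt to verify the Hurwitz inequalities symbolically, since $\mathcal{E}_4$ is only known implicitly through the cubic root $v_1$. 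The theorem is stated conditionally on \eqref{equation40}, so checking them numerically is deferred to the simulations.
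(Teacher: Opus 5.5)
Your route is the paper's: evaluate (24) at $\mathcal{E}_4$, form the quintic (38), and apply Routh--Hurwitz.

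Your condition list ($r_1,\dots,r_5>0$, $r_1r_2-r_3>0$, $(r_1r_2-r_3)(r_3r_4-r_2r_5)-(r_1r_4-r_5)^2>0$) is the Li\'enard--Chipart form. It is equivalent to the paper's Hurwitz-determinant list $H_1$--$H_5$ in (40), since each characterizes exactly the monic quintics with all roots in the open left half-plane.

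Rewriting the diagonal through (20) is a real improvement over the paper, which keeps $n_9$ and its $m_2,m_3$ with undetermined signs. In your form, $r_1>0$ and $r_2>0$ hold automatically. It does, however, require $M_1^4,E^4,L_1^4,L_2^4>0$. The condition stated after (23) does not by itself guarantee this: it controls the denominator of $L_2^4$ but not its numerator, and says nothing about $M_1^4$.

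One part of your bookkeeping is wrong. Use your $m_1,\dots,m_5$ and the paper's $n_1=\gamma_1M_1^4$, $n_2=\gamma_2M_1^4$, $n_3=P_1L_1^4$, $n_4=\delta_1+P_1L_2^4$, $n_5=a_1\gamma_1L_1^4$, $n_6=a_2\gamma_2L_2^4$. Separating the permutations that fix the maize index from the cycles through it gives
\begin{align*}
\det(xI-J)={}&(x+m_1)\det(xI-J_{\mathrm{FAW}})+n_1n_5(x+m_2)(x+m_4)(x+m_5)\\
&+n_2n_6(x+m_2)(x+m_3)(x+m_5)+(n_2n_4n_5-n_1n_3n_6)(x+m_2)(x+m_5)\\
&+\alpha\lambda\beta\delta_2\,n_1n_6 .
\end{align*}
Consequences for your sketch:

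\begin{enumerate}
\item The three-cycle cross terms are $+\gamma_2M_1^4\,a_1\gamma_1L_1^4(\delta_1+P_1L_2^4)$ and $-\gamma_1M_1^4\,P_1L_1^4\,a_2\gamma_2L_2^4$. The term $\gamma_1M_1^4a_2\gamma_2L_2^4(\delta_1+P_1L_2^4)$ you anticipate does not occur.
\item Your list of correction terms omits the five-cycle term $+\alpha\lambda\beta\delta_2\,\gamma_1M_1^4a_2\gamma_2L_2^4$, which enters $r_5$.
\item The reproductive loop is not the only negative contribution. The cannibalism--feeding loop $-n_1n_3n_6$ enters $r_3,r_4,r_5$ with a minus sign.
\end{enumerate}

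A correct expansion also will not reproduce the paper's (39): its $r_2$ contains $n_3n_4$ but not $n_1n_5+n_2n_6$. So state (40) in terms of your own coefficients. The conditional logic of the theorem is unaffected.
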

\begin{proof}
    Evaluating the Jacobian matrix \eqref{equation24} about the equilibrium point $\mathcal{E}_4$ yields
{\scriptsize
\begin{align*}\label{equation37}
J(\mathcal{E}_4) =
\begin{bmatrix}
b-\frac{ 2bM_1^4}{k} - (\gamma_1 L_1^4 + \gamma_2 L_2^4) &0& -\gamma_1 M_1^4 & -\gamma_2 M_1^4 & 0 \\
0 & -(\beta + \mu_1) & 0 & 0 & \alpha\lambda \\
a_1 \gamma_1 L_1^4 & \beta & (a_1 \gamma_1 M_1^4 - \delta_1 - P_1 L_2^4 - \mu_2) & -P_1 L_1^4 & 0 \\
a_2 \gamma_2 L_2^4 & 0 & (\delta_1 + P_1 L_2^4) & (a_2 \gamma_2 M_1^4 + P_1 L_1^4 - 2P_2 L_2^4 -\delta_2- \mu_3) & 0 \\
0 & 0 & 0 & \delta_2 & -\mu_4
\end{bmatrix}.\tag{37}
\end{align*}
}
The characteristic equation of the jacobian matrix \eqref{equation37} is
\begin{equation}
\begin{aligned}\label{equation38}
x^{5} + r_1 x^{4} + r_2 x^{3} + r_3 x^{2} + r_4 x + r_5 = 0   
\end{aligned}\tag{38}
\end{equation}
with
\[
\left\{
\begin{aligned}\label{equation39}
r_1 &= m_1 + m_2 + m_3 + m_4 - n_9, \\
r_2 &= (m_2 m_3 + n_3 n_4) + (m_1 + m_4)(m_2 + m_3) + m_1 m_4 - n_9(m_1 + m_2 + m_3 + m_4), \\
r_3 &= (m_1 + m_4)(m_2 m_3 + n_3 n_4) + m_1 m_4(m_2 + m_3) - n_9\big[(m_2 m_3 + n_3 n_4) + (m_1 + m_4)(m_2 + m_3) + m_1 m_4\big]\\
&\quad + n_1 n_6 \delta_2, \\
r_4 &= m_1 m_4(m_2 m_3 + n_3 n_4) - n_7 n_8 n_4 \delta_2 - n_9\big[(m_1 + m_4)(m_2 m_3 + n_3 n_4) + m_1 m_4(m_2 + m_3)\big] \\
&\quad+ n_1 \delta_2(m_1 n_6 + n_5 n_4 + n_6 m_2), \\
r_5 &= n_1 \delta_2(m_1 n_5 n_4 + m_1 n_6 m_2) - n_9\big[m_1 m_4(m_2 m_3 + n_3 n_4) - n_7 n_8 n_4 \delta_2\big],
\end{aligned}\tag{39}
\right.
\]
where
$m_1 = \beta + \mu_1,$ 
$m_2 = \delta_1 + P_1 L_2^4 + \mu_2 - a_1 \gamma_1 M_1^4, $
$m_3 = \delta_2 + \mu_3 + 2P_2 L_2^4 - P_1 L_1^4 - a_2 \gamma_2 M_1^4,$
$m_4 = \mu_4, $
$n_1 = \gamma_1 M_1^4, $
$n_2 = \gamma_2 M_1^4,$
$n_3 = P_1 L_1^4,$
$n_4 = \delta_1 + P_1 L_2^4,$
$n_5 = a_1 \gamma_1 L_1^4,$
$n_6 = a_2 \gamma_2 L_2^4,$
$n_7 = \alpha \lambda,$
$n_8 = \beta$ and
$n_9 = b - \frac{2bM_1^4}{k} - (\gamma_1 L_1^4 + \gamma_2 L_2^4).$
According to the Routh--Hurwitz criteria, the necessary and sufficient conditions for the local asymptotic stability of the equilibrium point $\mathcal{E}_4$ are that the Hurwitz determinants associated with the characteristic polynomial \eqref{equation38} are positive \cite{Lancaster1969}. For a fifth-degree polynomial, these conditions are
\begin{equation}\label{equation40}
\left\{
\begin{aligned}
H_1 &:\quad r_1>0,\; \\
H_2 &:\quad r_1r_2-r_3>0,\\
H_3 &:\quad r_1r_2r_3-r_1^{2}r_4-r_3^{2}+r_1r_5>0,\\
H_4 &:\quad (r_3r_4-r_2r_5)(r_1r_2-r_3)-(r_1r_4-r_5)^2>0,\\
H_5 &:\quad r_5H_4>0.
\end{aligned}
\right.
\tag{40}
\end{equation}
Therefore, the equilibrium point $\mathcal{E}_4$ is locally asymptotically stable provided that conditions in \eqref{equation40} are satisfied; otherwise, it is unstable. Hence, the theorem holds.
\end{proof}
\subsubsection{Global Stability of Equilibrium Points}\label{section2.1.6}
In this section, we construct Lyapunov functions to analyze the global stability of the equilibrium points obtained in subsubsection \eqref{section2.1.4}.
\\
\\
\textbf{(a)} The trivial equilibrium point $\mathcal{E}_1$. Since $\mathcal{E}_1$ is locally unstable as established by the positive eigenvalue $b > 0$, it cannot be globally asymptotically stable. Therefore, no further investigation of its global stability is undertaken.
\\
\\
\textbf{(b)} The non trivial equilibrium point $\mathcal{E}_2$.  We Consider the Lyapunov function
\[ \tag{41}\label{equation41}
U_2(M_1, E, L_1, L_2, A) = M_1 - k - k \ln\left(\frac{M_1}{k}\right) + \frac{m_4}{\alpha \lambda} E + \frac{m_1 m_4}{\alpha \lambda \beta} L_1 + \frac{m_1 m_4}{\alpha \lambda \delta_1} L_2 + \frac{m_1 m_3 m_4}{\alpha \lambda \delta_1 \delta_2} A,
\]
where $m_1 = \beta + \mu_1,$ $m_2 = \delta_1 + \mu_2$, $m_3 = \delta_2 + \mu_3$ and $m_4 = \mu_4.$
It is evident that the function $U_2(M_1, E, L_1, L_2, A) $ is positive definite, continuous and vanishes at $\mathcal{E}_2$. Taking the derivative along the solutions of system \eqref{equation7} yields
\[ \tag{42}\label{equation42}
\begin{aligned}
\frac{dU_2}{dt}
&\leq
\left(1-\frac{k}{M_1}\right)\frac{dM_1}{dt}
+\frac{m_4}{\alpha\lambda}\frac{dE}{dt}
+\frac{m_1m_4}{\alpha\lambda\beta}\frac{dL_1}{dt} 
+\frac{m_1m_4}{\alpha\lambda\delta_1}\frac{dL_2}{dt}
+\frac{m_1m_3m_4}{\alpha\lambda\delta_1\delta_2}\frac{dA}{dt}.
\end{aligned}
\]
\[ 
\begin{aligned}
\frac{dU_2}{dt}
&=
-\frac{b}{k}(M_1-k)^2 
+ \gamma_1 L_1 \left[ k + M_1 \left( \frac{m_1 m_4 a_1}{\alpha\lambda\beta} - 1 \right) \right] 
+ \gamma_2 L_2 \left[ k + M_1 \left( \frac{m_1 m_4 a_2}{\alpha\lambda\delta_1} - 1 \right) \right] \\
&\quad
+ m_4 A \left( 1 - \frac{m_1 m_3 m_4}{\alpha\lambda\delta_1\delta_2} \right) 
- \frac{2m_1 m_4 P_2}{\alpha\lambda\delta_1} L_2^2 
+ \frac{m_1 m_4}{\alpha\lambda} L_1 \left( 1 - \frac{m_2}{\beta} \right) 
+ \frac{m_1 m_4 P_1}{\alpha\lambda} L_1 L_2 \left( \frac{1}{\delta_1} - \frac{1}{\beta} \right).
\end{aligned}
\]
The non-trivial equilibrium point $\mathcal{E}_2$ is globally asymptotically stable if the following conditions hold:
$M_1(t) \geq k$,
$m_1 m_3 m_4 \leq \alpha\lambda\delta_1\delta_2$,
$m_1 m_4 a_1 \leq \alpha\lambda\beta$,
$m_1 m_4 a_2 \leq \alpha\lambda\delta_1$ and $\beta \leq \delta_1$; otherwise, it is unstable.
Under these conditions, $\frac{dU_2}{dt} \leq 0$.
\\
\\
\textbf{(c)}To investigate the possibility of establishing the global asymptotic stability of the maize extinction equilibrium point $\mathcal{E}_3$ and the co-existence equilibrium point $\mathcal{E}_4$, we consider the generalized Lyapunov function
\[ \tag{43}\label{equation43}
\begin{aligned}
U_3(M_1, E, L_1, L_2, A) &= f_0\left(M_1 - M_1^* - M_1^*\ln\frac{M_1}{M_1^*}\right) + f_1\left(E - E^* - E^*\ln\frac{E}{E^*}\right) + f_2\left(L_1 - L_1^* - L_1^*\ln\frac{L_1}{L_1^*}\right) \\
&\quad + f_3\left(L_2 - L_2^* - L_2^*\ln\frac{L_2}{L_2^*}\right)  + f_4\left(A - A^* - A^*\ln\frac{A}{A^*}\right),
\end{aligned}
\]
where $f_0$, $f_1$, $f_2$, $f_3$, $f_4$ are arbitrary positive constants. Taking the derivative  along the solutions of system \eqref{equation7} yields
\[
\begin{aligned} \label{equation44}
\frac{dU_3}{dt} &\leq f_0\left(1-\frac{M_1^*}{M_1}\right)\frac{dM_1}{dt}
+ f_1\left(1-\frac{E^*}{E}\right)\frac{dE}{dt}
+ f_2\left(1-\frac{L_1^*}{L_1}\right)\frac{dL_1}{dt}  
+ f_3\left(1-\frac{L_2^*}{L_2}\right)\frac{dL_2}{dt}
+ f_4\left(1-\frac{A^*}{A}\right)\frac{dA}{dt}.
\end{aligned} \tag{44}\
\]
\[ \tag{45}\label{equation45}
\begin{aligned}
\frac{dU_3}{dt} &= f_0 b (M_1 - M_1^*)\left(1 - \frac{M_1}{k}\right) - f_0 \gamma_1 (M_1 - M_1^*) L_1 - f_0 \gamma_2 (M_1 - M_1^*) L_2 - f_1 m_1 (E - E^*) + f_1 \alpha \lambda \left(1 - \frac{E^*}{E}\right) A \\
&\quad + f_2 \beta (E - E^*) + f_2 a_1 \gamma_1 (M_1 L_1 - M_1 L_1^*) - f_2 m_2 (L_1 - L_1^*) - f_2 P_1 (L_1 L_2 - L_1^* L_2)  + f_3 \delta_1 (L_1 - L_1^*) \\
&\quad + f_3 P_1 (L_1 L_2 - L_1 L_2^*) + f_3 a_2 \gamma_2 (M_1 L_2 - M_1 L_2^*) - f_3 m_3 (L_2 - L_2^*) - 2 f_3 P_2 (L_2^2 - L_2 L_2^*)  + f_4 \delta_2 (L_2 - L_2^*)\\
&\quad -f_4 m_4 (A - A^*).
\end{aligned}
\]
From equation \eqref{equation45}, it is evident that $\frac{dU_3}{dt}$ contains nonlinear interaction terms, 
whose signs cannot, in general be determined . Consequently, it is not possible to establish that $\frac{dU_3}{dt}\leq0$ throughout the feasible region. Therefore, the proposed Lyapunov function is insufficient to establish the global asymptotic stability of $\mathcal{E}_3$ and $\mathcal{E}_4$. Hence, the global stability of these equilibrium points remains unresolved for the present model.

\section{The Optimal Control Problem}\label{sect3}
In this section, an optimal control framework is introduced to determine optimal strategies for minimizing the FAW population while accounting for the costs associated with implementing control measures. The construction of an optimal control problem is necessary because continuous and excessive use of control measures, especially chemical pesticides, can be costly and environmentally harmful. To achieve this, the model is reformulated for both the vegetative and reproductive stages by incorporating two time-dependent control parameters, $u_1(t)$ and $u_2(t)$. Control $u_1(t)$ represents the effort associated with traditional control methods, such as handpicking and destruction of FAW egg masses and early instar larvae, while control $u_2(t)$ represents the application of chemical pesticides targeting late instar larvae. Incorporating these controls into the deterministic model yields the following controlled system of differential equations. During the vegetative stage, the controlled system is given by

\begin{equation}\label{equation46}
\left\{
\begin{aligned}
\frac{dM_1}{dt} &= b M_1 \left(1 - \frac{M_1}{k}\right) - (\gamma_1 L_1 + \gamma_2 L_2) M_1 ,\\
\frac{dE}{dt} &= \alpha \lambda A - (\beta + \mu_1 + u_1)E, \\
\frac{dL_1}{dt} &= \beta E + a_1\gamma_1 L_1 M_1 - P_1 L_1L_2 - (\delta_1 + \mu_2 + u_1) L_1, \\
\frac{dL_2}{dt} &= a_2 \gamma_2 L_2 M_1 + P_1 L_1L_2 + \delta_1 L_1 - P_2 L^2_2 - (\delta_2 + \mu_3 + u_2)L_2 ,\\
\frac{dA}{dt} &= \delta_2 L_2 - \mu_4 A.
\end{aligned}
\right.
\tag{46}
\end{equation}
To ensure biological feasibility, the initial conditions satisfy
\[
M_1(0)=k,\quad \quad E(0) \geq 0,\quad L_1(0) \geq 0,\quad L_2(0) \geq 0,\quad A(0) \geq 0.
\]
Similarly, during the reproductive stage, the dynamics are governed by
\begin{equation}\label{equation47}
\left\{
\begin{aligned}
\frac{dM_2}{dt} &= b M_2 \left(1 - \frac{M_2}{k}\right) - (\gamma_3 L_1 + \gamma_4 L_2) M_2 ,\\
\frac{dE}{dt} &= \alpha \lambda A - (\beta + \mu_1 + u_1)E, \\
\frac{dL_1}{dt} &= \beta E + a_3\gamma_3 L_1 M_2 - P_1 L_1L_2 - (\delta_1 + \mu_2 + u_1) L_1, \\
\frac{dL_2}{dt} &= a_4 \gamma_4 L_2 M_2 + P_1 L_1L_2 + \delta_1 L_1 - P_2 L^2_2 - (\delta_2 + \mu_3 + u_2)L_2 ,\\
\frac{dA}{dt} &= \delta_2 L_2 - \mu_4 A,
\end{aligned}
\right.
\tag{47}
\end{equation}
with the transition conditions
$M_1(t_1)=M_2(t_1)$ and $M_2(t)=0$ for 
$t<t_1$, ensuring continuity between the two maize growth stages. The admissible control sets for the vegetative and reproductive stages are defined, respectively. The admissible control set is defined by
\[\label{equation48}
\Gamma = \left\{ (u_1(t), u_2(t)) \,\middle|\, 0 \leq u_1(t) \leq u_{1,\text{max}},\; 0 \leq u_2(t) \leq u_{2,\text{max}} \right\},\tag{48}
\]
where $u_{1,\text{max}}$ and $u_{2,\text{max}}$ denote the maximum allowable levels of the traditional and chemical control strategies, respectively. For the vegetative stage, $u_{1,\text{max}}=0.4$ and $u_{2,\text{max}}=0.5$, while for the reproductive stage, $u_{1,\text{max}}=0.1$ and $u_{2,\text{max}}=0.2$.
The primary objective is to suppress the FAW population at its early and most destructive stages while limiting the cost of implementing the control measures. Accordingly, the objective function is defined as
\[\label{equation49}
J(u_1, u_2) = \int_0^T \left( C_1 E(t) + C_2 L_1(t) + C_3 L_2(t) + \frac{W_1}{2} u_1^2(t) + \frac{W_2}{2} u_2^2(t) \right) dt,\tag{49}
\]
subject to the system of differential equations in systems \eqref{equation46} and \eqref{equation47}. $C_1$, $C_2$ and $C_3$ are non-negative weights associated with the egg, early instar larvae and late instar larvae, respectively, while $W_1$ and $W_2$ are positive weights associated with the control efforts. The quadratic term ,$\frac{W_1}{2} u_1^2(t) + \frac{W_2}{2} u_2^2(t)$ regulate the intensity of the control effort while ensuring that the controls remain bounded. Its quadratic structure also preserves the smoothness and convexity of the Hamiltonian which guarantees the uniqueness of the optimal control. The main objective is therefore to determine an optimal pair of control $(u_1^*, u_2^*)$ that effectively minimizes the population of FAW eggs and larvae. By Pontryagin's Maximum Principle \cite{Bettiol2021}, the Hamiltonian associated with system \eqref{equation46} is defined as
\begin{equation}\label{equation50}
\left\{
\begin{aligned}
H &= C_1 E + C_2 L_1 + C_3 L_2 + \frac{W_1}{2} u_1^2 + \frac{W_2}{2} u_2^2 \\
& +  \lambda_1[  b M_1 \left(1 - \frac{M_1}{k}\right) - (\gamma_1 L_1 + \gamma_2 L_2) M_1 ] \\
&+\lambda_2[ \alpha \lambda A - (\beta + \mu_1+ u_1) E]\\
&+\lambda_3[\beta E + a_1\gamma_1 L_1 M_1 - P_1 L_1L_2 - (\delta_1 + \mu_2 + u_1) L_1 ]\\
&+\lambda_4[a_2 \gamma_2 L_2 M_1 + P_1 L_1L_2 + \delta_1 L_1 - P_2 L^2_2 - (\delta_2 + \mu_3 + u_2)L_2]\\
&+\lambda_5[ \delta_2 L_2 - \mu_4 A].
\end{aligned}
\right.
\tag{50}
\end{equation}
The adjoint variables $\lambda_i(t)$, corresponding to the state variables $x_i = M_1, E, L_1, L_2, A$, satisfy the adjoint system 
\[
\frac{d\lambda_i}{dt} = -\frac{\partial H}{\partial x_i} \qquad i=1,2,3,4,5.
\]
Applying Pontryagin's Maximum Principle to the Hamiltonian \eqref{equation50} yields the following adjoint equations
\begin{equation}\label{equation51}
\left\{
\begin{aligned}
\frac{d\lambda_1}{dt} = \frac{\partial H}{\partial M_1} &= -\left[ \lambda_1 \left( b \left(1 - \frac{2M_1}{k}\right) - (\gamma_1 L_1 + \gamma_2 L_2) \right) + \lambda_3 a_1 \gamma_1 L_1 + \lambda_4 a_2 \gamma_2 L_2\right], \\
\frac{d\lambda_2}{dt} = \frac{\partial H}{\partial E} &= -\left[ C_1 - \lambda_2 (\beta + \mu_1 + u_1)  + \lambda_3 \beta \right], \\
\frac{d\lambda_3}{dt} = \frac{\partial H}{\partial L_1} &= -\left[ C_2 - \lambda_1 \gamma_1 M_1 + \lambda_3 (a_1 \gamma_1 M_1 - P_1L_2 - (\delta_1 + \mu_2 + u_1))  + \lambda_4 ( P_1L_2 + \delta_1) \right],\\
\frac{d\lambda_4}{dt} = \frac{\partial H}{\partial L_2} &= -\left[ C_3 - \lambda_1 \gamma_2 M_1 - \lambda_3 P_1L_1 + \lambda_4 (a_2 \gamma_2 M_1 + P_1L_1 -2P_2L_2-(\delta_2+\mu_3+u_2)) +\lambda_5 \delta_2 \right], \\
\frac{d\lambda_5}{dt} =  \frac{\partial H}{\partial A} &= -\left[\lambda_2 \alpha\lambda - \lambda_5 \mu_4\right],
\end{aligned}
\right.
\tag{51}
\end{equation}
subject to the transversality condition $\lambda_i(T)=0$ for  $i= 1,2,3,4,5$, where \(T\) denotes the final simulation time. The optimal controls $u^*_1$ and $u^*_2$ are then obtained as
\[
\frac{\partial H}{\partial u_1}=0
\quad \Rightarrow \quad
u_1^*(t)=\frac{E(t)\lambda_2(t) + L_1(t)\lambda_3(t)}{W_1} \qquad \text{and} \qquad
\frac{\partial H}{\partial u_2}=0
\quad \Rightarrow \quad
u_2^*(t)=\frac{L_2(t)\lambda_4(t)}{W_2}.
\]
Since the controls are restricted to the admissible set $\Gamma$ defined in \eqref{equation48}, the optimal controls are projected onto their respective lower and upper bounds. Thus, the bounded optimal controls are characterized by
\begin{equation}\label{equation52}
\left\{
\begin{aligned}
u_1^*(t) &= \min\left\{ u_{1,\max},
\max\left( \frac{E(t)\lambda_2(t)+L_1(t)\lambda_3(t)}{W_1},0\right)\right\},\\[1ex]
u_2^*(t) &= \min\left\{ u_{2,\max},
\max\left( \frac{L_2(t)\lambda_4(t)}{W_2},0\right)\right\}.
\end{aligned}
\tag{52}
\right.
\end{equation}
The expressions in \eqref{equation52} ensure that the controls remain within their biologically and economically feasible bounds throughout the simulation period.
Using the same approach, the Hamiltonian associated with system \eqref{equation47} takes the form
\begin{equation}\label{equation53}
\left\{
\begin{aligned}
H &= C_1 E + C_2 L_1 + C_3 L_2 + \frac{W_1}{2} u_1^2 + \frac{W_2}{2} u_2^2 \\
& +  \lambda_1[   b M_2 \left(1 - \frac{M_2}{k}\right) - (\gamma_3 L_1 + \gamma_4 L_2) M_2 ] \\
&+\lambda_2[ \alpha \lambda A - (\beta + \mu_1+u_1) E]\\
&+\lambda_3[ \beta E + a_3\gamma_3 L_1 M_2 - P_1 L_1L_2 - (\delta_1 + \mu_2 + u_1) L_1 ]\\
&+\lambda_4[a_4 \gamma_4 L_2 M_2 + P_1 L_1L_2 + \delta_1 L_1 - P_2 L^2_2 - (\delta_2 + \mu_3 + u_2)L_2]\\
&+\lambda_5[\delta_2 L_2 - \mu_4 A].
\end{aligned}
\right.
\tag{53}
\end{equation}
The adjoint variables $\lambda_i(t)$, associated with the state variables $M_2$, $E$, $L_1$, $L_2$ and $A$, satisfy the following adjoint equations:
\begin{equation}\label{equation54}
\left\{
\begin{aligned}
\frac{d\lambda_1}{dt} = \frac{\partial H}{\partial M_2} &= -\left[ \lambda_1 \left( b \left(1 - \frac{2M_2}{k}\right) - (\gamma_3 L_1 + \gamma_4 L_2) \right) + \lambda_3 a_3 \gamma_3 L_1 + \lambda_4 a_4 \gamma_4 L_2\right], \\
\frac{d\lambda_2}{dt} = \frac{\partial H}{\partial E} &= -\left[ C_1 - \lambda_2 (\beta + \mu_1 + u_1)  + \lambda_3 \beta \right], \\
\frac{d\lambda_3}{dt} = \frac{\partial H}{\partial L_1} &= -\left[ C_2 - \lambda_1 \gamma_3 M_2 + \lambda_3 (a_3 \gamma_3 M_2 - P_1L_2 - (\delta_1 + \mu_2 + u_1))  + \lambda_4 ( P_1L_2 + \delta_1) \right],\\
\frac{d\lambda_4}{dt} = \frac{\partial H}{\partial L_2} &= -\left[ C_3 - \lambda_1 \gamma_4 M_2 - \lambda_3 P_1L_1 + \lambda_4 (a_4 \gamma_4 M_2 + P_1L_1 -2P_2L_2-(\delta_2+\mu_3+u_2)) +\lambda_5 \delta_2 \right], \\
\frac{d\lambda_5}{dt} =  \frac{\partial H}{\partial A} &= -\left[\lambda_2 \alpha\lambda - \lambda_5 \mu_4\right],
\end{aligned}
\right.
\tag{54}
\end{equation}
subject to the transversality condition $\lambda_i(T)=0$ for $i= 1,2,3,4,5$, where \(T\) denotes the final simulation time. The optimal controls $u^*_1$ and $u^*_2$ are then determined as
\[
\frac{\partial H}{\partial u_1}=0
\quad \Rightarrow \quad
u_1^*(t)=\frac{E(t)\lambda_2(t) + L_1(t)\lambda_3(t)}{W_1} \qquad \text{and} \qquad
\frac{\partial H}{\partial u_2}=0
\quad \Rightarrow \quad
u_2^*(t)=\frac{L_2(t)\lambda_4(t)}{W_2}.
\]
Furthermore, the bounded optimal controls are characterized by
\[
u_1^*(t)=\min\left\{u_{1,\max},\max\left(\frac{E(t)\lambda_2(t)+L_1(t)\lambda_3(t)}{W_1},0\right)\right\},\qquad
u_2^*(t)=\min\left\{u_{2,\max},\max\left(\frac{L_2(t)\lambda_4(t)}{W_2},0\right)\right\}.
\]
\section{Numerical Simulations}\label{sect4}
\subsection{Model Parametrization}\label{subsect4.1}
This section provides an overview of the baseline parameter values used in the model. Some of the parameter values were adopted from the existing literature. For parameters with no experimentally reported or published values, biologically plausible values were assumed based on the known biology and life cycle of the Fall Armyworm. These assumptions provide a representative baseline for the numerical simulations and improve the reproducibility of the model results. A brief justification of the parameter values is presented below.
\begin{enumerate}
    \item \textbf{Natural mortality rates ($\mu_1,\mu_2,\mu_3,\mu_4$).} Following Daudi et al.~\cite{daudi_modelling_2021}, the natural mortality rate is defined as the reciprocal of the expected lifespan of an organism. Hence,
    \[\tag{55}\label{55}
    \text{Mortality rate} = \frac{1}{\text{expected lifetime}}.
    \]
    The mortality rates employed in this study, including both adopted and assumed values, are presented in Table~\ref{table4}.
    \item \textbf{Egg-laying rate ($\lambda$).} 
    The egg-laying rate was adopted from Daudi et al. \cite{daudi_modelling_2021} and incorporated into the model as the effective daily oviposition rate of adult moths. Thus, $\lambda$ represents the rate of egg production used in the mathematical formulation and is not intended to represent the total lifetime fecundity of an adult female moth. The adopted value is provided in Table \ref{table4}.
    \item \textbf{Egg hatching and larval development rates ($\beta,\delta_1,\delta_2$).} The transition rates $\beta$, $\delta_1$, and $\delta_2$ were adopted directly from Daudi et al.~\cite{daudi_modelling_2021} and are used as effective stage-transition parameters within the mathematical formulation. In particular, $\beta$ is treated as a model-specific egg-to-larva transition rate and is not interpreted as the direct reciprocal of the observed biological egg-stage duration. The biological durations reported in the literature are therefore provided as reference values for the life-history stages rather than as direct numerical equivalents of the transition rates adopted in the model. The egg stage typically lasts 2--4 days depending on climatic conditions \cite{assefa2020status}, while the larval stage lasts about 14 days under warm conditions and up to 30 days during cooler weather \cite{adjaoke2022fall}.
    \item \textbf{Maize destruction rates ($\gamma_1,\gamma_2,\gamma_3,\gamma_4$).} These parameters were assumed due to lack of reliable experimental estimates and were calibrated within biologically plausible ranges to produce biologically realistic maize damage and FAW population dynamics. The selected values reflect the greater feeding capacity of late instar larvae compared to early instar larvae during both the vegetative and reproductive stages. 
    \item \textbf{Feed conversion rates ($a_1,a_2,a_3,a_4$).} These parameters were calibrated within biologically plausible ranges to produce realistic maize consumption and FAW population dynamics. Higher conversion rates were assigned to late instar larvae to reflect their greater feeding efficiency. 
    \item \textbf{Cannibalism rates ($P_1,P_2$).} These parameters were assumed based on the documented cannibalistic behaviour of larvae. Their values were selected within biologically plausible ranges and calibrated through preliminary simulations to reproduce realistic interactions between the early instar larvae  and late instar larvae while maintaining stable model dynamics.
\end{enumerate}
\subsection{Sensitivity Analysis of the Reproduction Number}
The analysis of our model demonstrates that the basic reproduction number acts as a key threshold indicator governing whether FAW infestations die out or persist during an outbreak. Because the parameters used in constructing the model are either estimated or adapted from existing literature, it is essential to examine how sensitive the basic reproduction number is to changes in these parameters. Such an examination provides insights into the uncertainties surrounding their values. To evaluate the contribution and significance of each parameter in shaping the basic reproduction number, we recall the normalized forward sensitivity index of $R_0$ given by \cite{alemenh2019ecoepidemiological}
\[
\Lambda_\rho^{R_0} = \frac{\partial R_0}{\partial \rho} \times \frac{\rho}{R_0},\tag{56}
\]
where $R_0$ represents the basic reproductive number and $\rho$ is the set of basic parameters whose sensitivities are as follows:
{\small
\begin{equation}\tag{57}\label{equation57}
\left\{
\begin{aligned}
\Lambda_{k}^{R_0}&=1, &
\Lambda_{a_1}^{R_0}&=\frac{a_1\gamma_1\mu_3}{a_1\gamma_1\mu_3+a_2\gamma_2\mu_2}\ge0, &
\Lambda_{a_2}^{R_0}&=\frac{a_2\gamma_2\mu_2}{a_1\gamma_1\mu_3+a_2\gamma_2\mu_2}\ge0,\\
\Lambda_{\gamma_1}^{R_0}&=\frac{a_1\gamma_1\mu_3}{a_1\gamma_1\mu_3+a_2\gamma_2\mu_2}\ge0, &
\Lambda_{\gamma_2}^{R_0}&=\frac{a_2\gamma_2\mu_2}{a_1\gamma_1\mu_3+a_2\gamma_2\mu_2}\ge0, &
\Lambda_{\mu_2}^{R_0}&=-\frac{a_1\gamma_1\mu_3}{a_1\gamma_1\mu_3+a_2\gamma_2\mu_2}\le0,\\
\Lambda_{\mu_3}^{R_0}&=-\frac{a_2\gamma_2\mu_2}{a_1\gamma_1\mu_3+a_2\gamma_2\mu_2}\le0.
\end{aligned}
\right.
\end{equation}
}
Using the baseline values in Table (\ref{table4}), the numerical outcomes for the mathematical expression in equations \eqref{equation57} were obtained and presented in Table (\ref{table3}) and the graphic illustration is in Figure (\ref{fig:2}).
\begin{table}[H]
\centering
\caption{\textbf{Normalized forward sensitivity indices of $R_0$ with respect to the model parameters.}}
\centering
\small
\renewcommand{\arraystretch}{0.6}
\begin{tabular}{llcccccccc}
\toprule
\multicolumn{2}{l}{\textbf{Parameter}} & $k$ & $a_1$ & $a_2$ & $\gamma_1$ & $\gamma_2$ & $\mu_2$ & $\mu_3$ \\
\midrule
\multicolumn{2}{l}{\textbf{Sensitivity index}} & +1 & +0.01 & +0.99 & +0.01 & +0.99 & -0.01 & -0.99 \\
\bottomrule
\end{tabular}
\label{table3}
\end{table}
The results indicate that the parameters $k$, $a_2$ and $\gamma_2$ have the highest positive sensitivity indices, implying that small increases in these parameters lead to significant increase in $R_0$. Among them, $k$ shows the highest sensitivity, implying that any proportional increase in maize leads to an equivalent proportional rise in the FAW population. Similarly, the parameters $a_2$ and $\gamma_2$ , associated with late instar larvae feeding and damage rates, significantly contribute to the increase in $R_0$. This emphasizes the critical role of late instar larvae in driving infestation intensity and crop damage. In contrast, parameters $\mu_2$ and $\mu_3$ represent the natural mortality rates of early and late instar larvae respectively, exhibiting negative sensitivity indices. This indicates that increasing larval mortality reduces $R_0$, thereby helping to control the FAW spread. However, their influence is comparatively smaller than that of $k$, $a_2$ and $\gamma_2$. Figure \ref{fig:2} presents the sensitivity analysis of the basic reproduction number with respect to key model parameters. 
\begin{figure}[h!]
    \centering
     \includegraphics[width=0.58\linewidth]{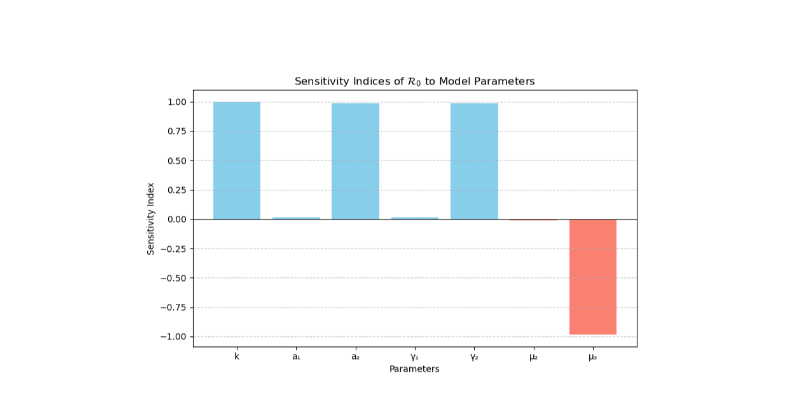}
    \caption{\textbf{Sensitivity analysis results showing the influence of model parameters on $R_0$.}}
    \label{fig:2}
\end{figure}
\begin{table}[h!]
\centering
\caption{\textbf{Parameter values for Systems \eqref{equation7} and \eqref{equation8}.}}
\resizebox{\textwidth}{!}{%
\begin{tabular}{|c|m{13.0cm}|c|c|}
\hline
\textbf{Parameter} & \textbf{Description} & \textbf{Baseline Value} & \textbf{Source} \\
\hline
$b$ & Intrinsic growth rate of maize plant population at $t=0$ & 0.05 $\mathrm{day}^{-1}$ & \cite{daudi2021mathematical} \\
\hline
$k$ & Maximum number of maize plants at $t=0$ & 500 plants & \cite{daudi_modelling_2021} \\
\hline
$\gamma_1$ & Maize destruction rate by early instar larvae in Period I & $9\times10^{-5}\,\mathrm{larva}^{-1}\,\mathrm{day}^{-1}$ & assumed \\
\hline
$\gamma_2$ & Maize destruction rate by late instar larvae in Period I & 0.00279 $\mathrm{larva}^{-1}\,\mathrm{day}^{-1}$ & assumed \\
\hline
$\gamma_3$ & Maize destruction rate by early instar larvae in Period II & 0.005 $\mathrm{larva}^{-1}\,\mathrm{day}^{-1}$ & assumed \\
\hline
$\gamma_4$ & Maize destruction rate by late instar larvae in Period II & 0.0167 $\mathrm{larva}^{-1}\,\mathrm{day}^{-1}$ & assumed \\
\hline
$\alpha$ & Proportion of female adult moths & 0.5 & \cite{daudi2021mathematical} \\
\hline
$\lambda$ & Egg laying rate & 0.0417 eggs\, $\mathrm{adult}^{-1}\,\mathrm{day}^{-1}$ & \cite{daudi_modelling_2021} \\
\hline
$\beta$ & Hatching rate of FAW eggs into early instar & 0.071 $\mathrm{day}^{-1}$ & \cite{daudi_modelling_2021} \\
\hline
$a_1$ & Conversion rate of maize into feed by early instar larvae  (Period I) & 0.001 larva\,$\mathrm{plant}^{-1}$ & assumed \\
\hline
$a_2$ & Conversion rate of maize into feed by late instar larvae (Period I) & 0.0101  larva\,$\mathrm{plant}^{-1}$ & assumed \\
\hline
$a_3$ & Conversion rate of maize into feed by early instar larvae (Period II) & 0.004  larva\,$\mathrm{plant}^{-1}$ & assumed \\
\hline
$a_4$ & Conversion rate of maize into feed by late instar larvae (Period II) & 0.008  larva\,$\mathrm{plant}^{-1}$ & assumed \\
\hline
$\delta_1$ & Transition rate of early to late instar larvae & 0.071 $\mathrm{day}^{-1}$ & \cite{daudi_modelling_2021} \\
\hline
$P_1$ & Rate of cannibalism between early and late instar larvae & 0.03 larva\,$\mathrm{day}^{-1}$ & assumed \\
\hline
$P_2$ & Rate of cannibalism within late instar larvae & 0.0099 larva\,$\mathrm{day}^{-1}$  & assumed \\
\hline
$\mu_1$ & Mortality rate of egg & 0.0009 $\mathrm{day}^{-1}$ & assumed \\
\hline
$\mu_2$ & Natural mortality rate of early instar larvae  & 0.0071 $\mathrm{day}^{-1}$ & \cite{daudi_modelling_2021} \\
\hline
$\mu_3$ & Natural mortality rate of late instar larvae  & 0.0071 $\mathrm{day}^{-1}$ & \cite{daudi_modelling_2021} \\
\hline
$\mu_4$ & Natural mortality rate of adult moth  & 0.1 $\mathrm{day}^{-1}$ & assumed \\
\hline
$\delta_2$ & Rate at which late instar larvae develop into adult moth & 0.071 $\mathrm{day}^{-1}$ & \cite{daudi_modelling_2021} \\
\hline
\end{tabular}}
\label{table4}
\end{table}
\newpage
The simulation results presented in Figures \ref{fig:3} and \ref{fig:4} illustrate the temporal dynamics of adult moths, eggs and larval instars, as well as their impact on maize growth. To perform the simulations for systems \eqref{equation7} and \eqref{equation8}, the initial conditions for period I were set as: $M_1(0) = 500,\; E(0) = 0,\; L_1(0) = 0,\; L_2(0) = 0,\; A(0) = 40$. The initial conditions for period II at $t_1=63$ were obtained from the terminal values of period I, ensuring continuity between the two stages. From the numerical simulation, these values are: $M_2(63) = 394 ,\; E(63) = 73,\; L_1(63) = 2.4,\; L_2(63) = 4.4,\; A(63) = 3$. System \eqref{equation7} was simulated over the interval  $t=0$ to $t_1=63$, while system \eqref{equation8} was simulated over the interval $t_1=63$ to $t_2= 160$. This approach enables a consistent evaluation of the interaction between maize and FAW population across both vegetative and reproductive stages.
\begin{figure}[h!]
    \centering
    \begin{subfigure}[t]{0.40\linewidth}
        \centering
       \includegraphics[width=\linewidth]{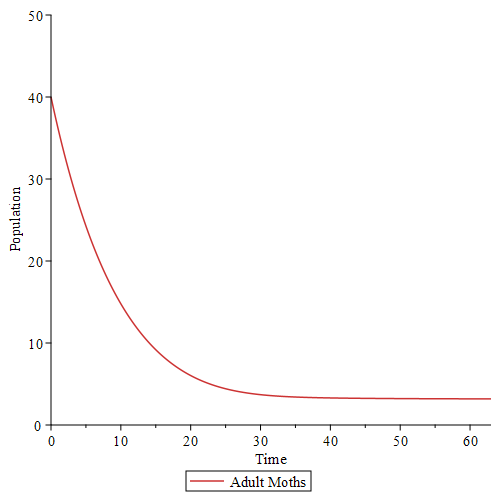}
        \caption{}
        \label{fig:adult}
    \end{subfigure}
    \hfill
    \begin{subfigure}[t]{0.40\linewidth}
        \centering
         \includegraphics[width=\linewidth]{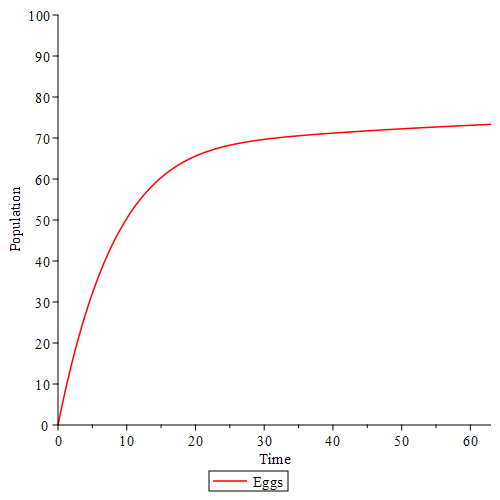}
        \caption{}
        \label{fig:egg}
    \end{subfigure}    

    \begin{subfigure}[t]{0.40\linewidth}
        \centering
        \includegraphics[width=\linewidth]{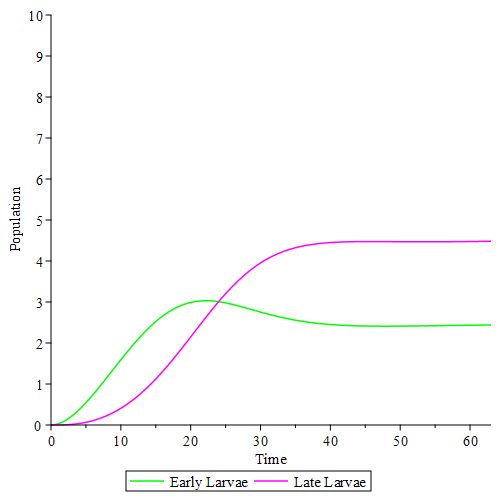}
        \caption{}
        \label{fig:larvae}
    \end{subfigure}
    \hfill
    \begin{subfigure}[t]{0.40\linewidth}
        \centering
        \includegraphics[width=\linewidth]{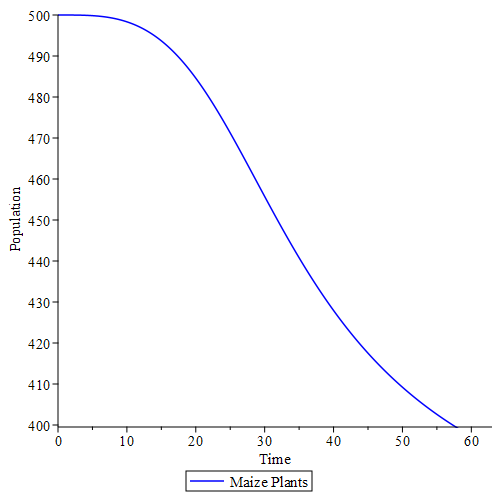}
        \caption{}
        \label{fig:maize}
    \end{subfigure}
    \caption{\textbf{ Numerical simulations of system \eqref{equation7} illustrating the interactions between the maize plant population and FAW population during the vegetative stage. Using the initial conditions and baseline parameter values in Table \ref{table4}, the computed basic reproduction number is $R_0=1.9908$. Since $R_0>1$, the FAW population persists, leading to a progressive decline in maize plant population.}}
    \label{fig:3}
\end{figure}
\\
Figure \eqref{fig:3} illustrates the dynamics of the maize plant population and the FAW population during the vegetative stage over the time interval $t=0$ to $t_1=63$. In panel (a), the adult moth population declines rapidly from 40 adult moth to 4.4 within the first 30 days, due to natural mortality. Panel (b) shows a sharp increase in egg population, reaching 73 eggs by day 20 before stabilizing, reflecting sustained oviposition by the surviving adult moths. In panel (c), the early instar larvae peak at 3 larvae by day 20, then decline slightly to about 2.4 larvae as they mature. The late instar larvae increase more gradually, surpassing the early instars around day 25 and stabilizing at approximately 4.4 larvae, indicating accumulation of more destructive stages. Panel (d) shows the maize plant population declining from 500 to 394 by day 58 representing a loss of about 21\% of the initial maize population. Biologically, this highlights that the late instar larvae are primarily responsible for severe crop damage due to their higher feeding rates. Overall, the results indicate that maize loss is primarily driven by the buildup of late instar larvae, highlighting the need of early intervention.
\begin{figure}[hbtp!]
    \centering
    \begin{subfigure}[t]{0.40\linewidth}
        \centering
      \includegraphics[width=\linewidth]{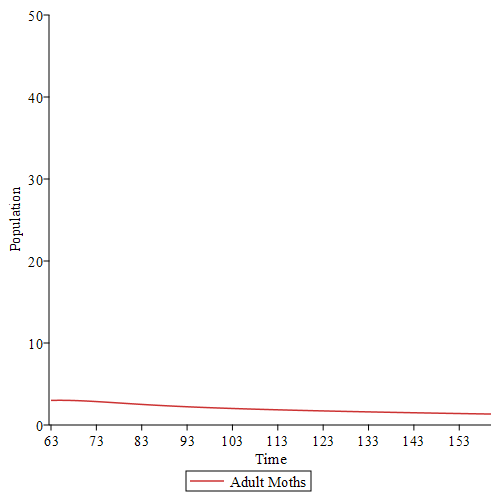}
        \caption{}
        \label{fig:adult}
    \end{subfigure}
    \hfill
    \begin{subfigure}[t]{0.40\linewidth}
        \centering
            \includegraphics[width=\linewidth]{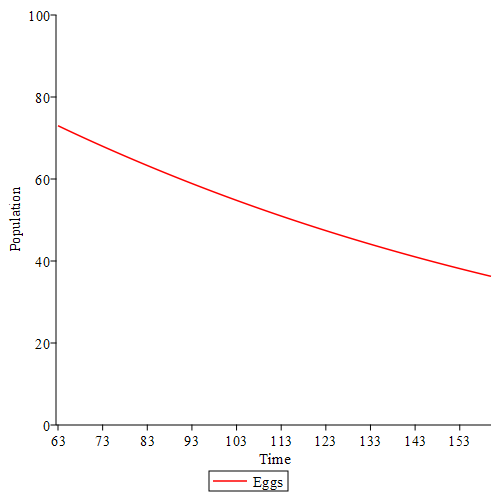}
        \caption{}
        \label{fig:egg}
    \end{subfigure}
    
    \begin{subfigure}[t]{0.40\linewidth}
        \centering
          \includegraphics[width=\linewidth]{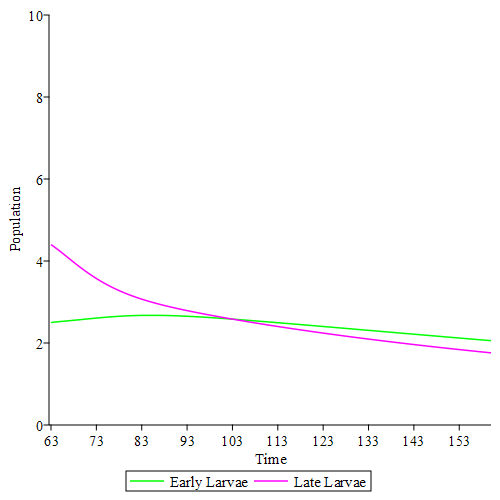}
        \caption{}
        \label{fig:larvae}
    \end{subfigure}
    \hfill
    \begin{subfigure}[t]{0.40\linewidth}
        \centering
             \includegraphics[width=\linewidth]{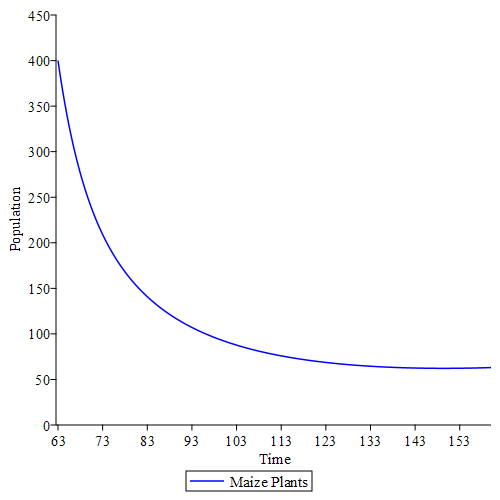}
        \caption{}
        \label{fig:maize}
    \end{subfigure}
    \caption{\textbf{  Numerical simulations of system \eqref{equation8} illustrating the interactions between the maize plant population and FAW population during the reproductive stage.  The computed basic reproduction number, $R_0=10.8169$, indicates sustained FAW infestation,  resulting in persistent larval instars and a marked decline in maize plant population.}}
    \label{fig:4}
\end{figure}
\\
Figure \ref{fig:4} presents the dynamics of the maize and FAW population during the reproductive stage  over the time interval $t_1=63$ to $t_2=160$. In panel (a), the adult moth population remains low, fluctuating between approximately 2 and 3 adult moths before gradually declining, indicating limited recruitment and continued natural mortality. Panel (b) shows a steady decline in egg population from approximately 73 eggs at $t=63$ to about 36 eggs by $t=160$, reflecting reduced oviposition and unfavorable conditions for reproduction. In panel (c), the early instar larvae initially increase slightly from 2.4 larvae before gradually declining to approximately 2 larvae. The late instar larvae, however, show a continuous decline 4.4 to 1.7 larvae at $t=160$. This behavior suggests reduced larval recruitment together with continued progression through the developmental stages. Panel (d) shows a sharp decline in maize plant population from approximately 394 at $t=63$ to 63 by $t=160$, representing a loss of approximately 85\% of the initial maize plant population. Overall, although the FAW population declines, the persistence of the late instar larvae continues to cause substantial maize loss throughout the reproductive stage.
\\
Figures \ref{fig:5} and \ref{fig:6} represent the behavior of the maize plant population and FAW population after applying controls. As noted by Daudi et al. \cite{daudi_modelling_2021}, pesticides act in a density-dependent manner, reducing FAW population through chemical toxicity but have no direct effect on maize plants. To reduce the damage caused by FAW on maize production, we developed and analyzed an optimal control framework targeting the egg, early instar larvae and late instar larvae. The objective of this approach is to minimize the pest population  and the associated maize damage.
\\
The optimality systems for both the vegetative and reproductive stages were solved numerically using the forward-backward sweep method. The state equations were integrated forward in time, while the adjoint equations were integrated backward in time using the forward Euler scheme with a uniform time step of $\Delta t=0.1$ days. For the reproductive stage, the adjoint equations were integrated backward from the terminal time $t=160$ to the initial time $t=63$, starting from the prescribed terminal transversality conditions. The weighting coefficients in the objective function were chosen as $C_1=2$, $C_2=4$ and $C_3=7$, assigning progressively greater importance to reducing eggs, early instar larvae and late instar larvae, respectively, to reflect their increasing contribution to maize damage. The control cost weights were selected as $W_1=300$ and $W_2=450$. A convergence tolerance of $10^{-4}$ was adopted and a maximum of $100$ forward-backward sweep iterations were permitted. The iterative procedure terminated when the maximum absolute difference between successive control profiles was less than $10^{-4}$, otherwise, the algorithm stopped after reaching the prescribed maximum number of iterations. Throughout the optimization process, the controls were projected onto their admissible bounds, $0\leq u_1(t)\leq u_{1,\max}$ and $0\leq u_2(t)\leq u_{2,\max}$. System \eqref{equation46} was simulated with control bounds \( 0 < u_1(t) \leq 0.4 \) and \( 0 < u_2(t) \leq 0.5 \) over the interval $t\in[0,63]$ while system \eqref{equation47} was simulated with control bounds \( 0 < u_1(t) \leq 0.1 \) and \( 0 < u_2(t) \leq 0.2 \)  over the interval $t\in [63,160]$. Thus, the vegetative and reproductive stages span 63 and 97 days, respectively. The resulting optimal control strategies effectively suppressed the FAW population while maintaining the maize plant population close to its carrying capacity, as illustrated in Figures \ref{fig:5} and \ref{fig:6}.
\begin{figure}[!ht]
\centering
\begin{subfigure}[t]{0.40\textwidth}
    \centering
    \includegraphics[width=\linewidth]{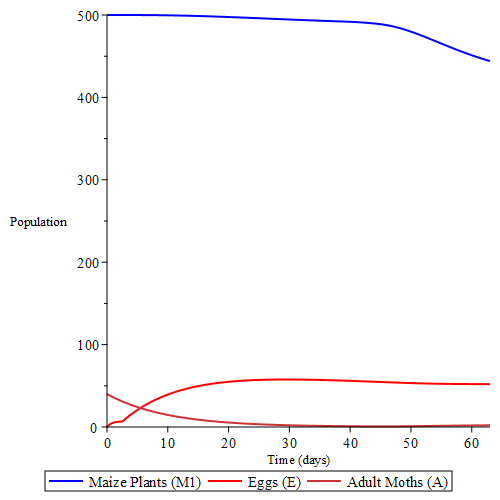}
    \caption{}
    \label{fig:rep_faw}
\end{subfigure}
\hfill
\begin{subfigure}[t]{0.40\textwidth}
    \centering
   \includegraphics[width=\linewidth]{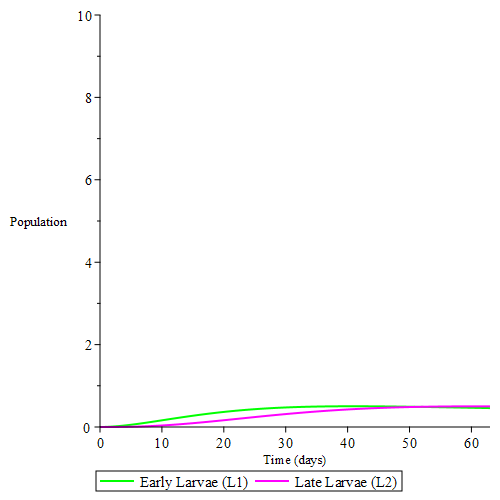}
    \caption{}
    \label{fig:rep1_faw}
\end{subfigure}\label{fig:5} 

\caption{\textbf{Impact of optimal control measures on the maize plant population and FAW population during the vegetative stage.}}
\label{fig:5}
\end{figure}
\\
Figure \ref{fig:5} illustrates the dynamics of the maize plant population and FAW population during the vegetative stage under the optimal control strategy. As shown in Figure \ref{fig:5}(a), the maize plant population remained close to its carrying capacity, with approximately 450 plants surviving at the end of the simulation, representing a substantial improvement over the uncontrolled scenario. The egg population increased slightly during the early stages due to continued oviposition by adult moths already present before their population was effectively suppressed. As shown in Figure \ref{fig:5}(b), both the early instar larvae and late instar larvae remained low throughout the simulation. Although a small number of eggs continued to hatch, the combined effects of traditional  and chemical controls prevented larval population from increasing substantially. Consequently, feeding damage was minimized, allowing more maize plants to survive during the vegetative stage. The optimal control profiles were initially maintained at their maximum admissible levels to suppress the developing infestation. The traditional control $u_1(t)$ remained at 0.4 for about 40 days, while the chemical control $u_2(t)$ remained at 0.5 for about 45 days. Thereafter, both controls gradually declined as the pest population was progressively brought under control, approaching zero by the end of the vegetative stage.
\begin{figure}[!ht]
\centering
\begin{subfigure}[t]{0.40\textwidth}
    \centering
    \includegraphics[width=\linewidth]{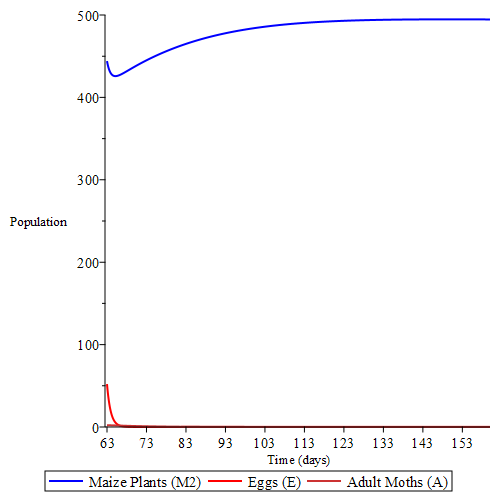}
    \caption{}
    \label{fig:rep_faw}
\end{subfigure}
\hfill
\begin{subfigure}[t]{0.40\textwidth}
    \centering
       \includegraphics[width=\linewidth]{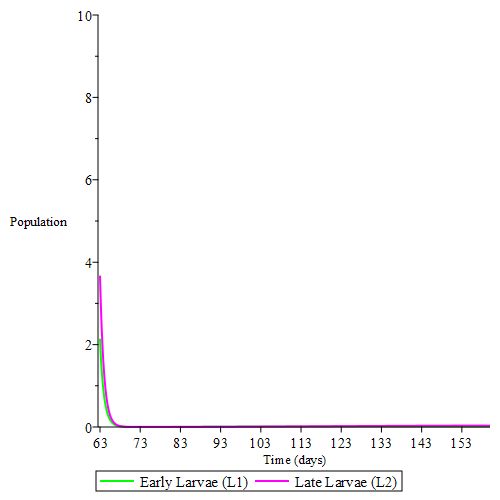}
    \caption{}
    \label{fig:rep1_faw}
\end{subfigure}\label{fig:6}

\caption{\textbf{Impact of optimal control measures on the maize plant population  and FAW population  during the reproductive stage.}}
\label{fig:6}
\end{figure}
\\
Figure \ref{fig:6} presents the dynamics of the maize plant population and FAW population during the reproductive stage under the optimal control strategy. As shown in Figure \ref{fig:6}(a), the egg and adult moth populations declined rapidly to negligible levels following the continuation of the control measures from the vegetative stage. This rapid decline reflects the cumulative effect of sustained handpicking, chemical control, natural mortality and the reduced recruitment of new adult moths resulting from the suppression of the larval instars. Consequently, oviposition was substantially reduced, preventing further population establishment. Figure \ref{fig:6}(b) shows that both the early instar larvae  and late instar larvae were effectively eliminated within a short period, indicating that the integrated control strategy successfully interrupted the FAW life cycle. With feeding pressure largely removed, maize damage was minimized throughout the reproductive stage, resulting in a substantially healthier crop compared with the uncontrolled scenario. The optimal control profiles were initially maintained at their maximum admissible levels $u_1=0.1$ and $u_2=0.2$ to eliminate the remaining pest population from the vegetative stage. The traditional control remained at its maximum for approximately 5 days, while the chemical control was maintained for about 10 days before both control efforts gradually declined as the pest population was completely suppressed. By the end of the reproductive stage, the control efforts had approached zero, reflecting the absence of further infestation.
\newpage

To assess model robustness across the vegetative and reproductive stages, a one-at-a-time (OAT) parameter variation analysis was performed by varying the maize feeding parameters $a_1$, $a_2$, $a_3$ and $a_4$, maize destruction rates $\gamma_1$, $\gamma_2$, $\gamma_3$ and $\gamma_4$ and larval mortality rates $\mu_2$ and $\mu_3$, individually by $\pm20\%$ from their baseline values, while keeping all other parameters fixed. The resulting population dynamics were compared with the corresponding baseline simulations for both stages. Overall, the OAT analysis showed that the model remained qualitatively robust to moderate parameter perturbations, with the maize population exhibiting the most noticeable quantitative changes. The corresponding results for the vegetative and reproductive stages are presented in Tables \eqref{Table 5} and \eqref{Table 6}, respectively.
\begin{table}[htbp]
\centering
\caption{\textbf{Vegetative stage output changes under ±20\% OAT parameter variation.}}
\label{Table 5}
\begin{tabular}{lcccccc}
\hline
\textbf{Parameter} & \textbf{Change} & \textbf{Maize} & \textbf{Eggs} & \textbf{Early instar larvae} & \textbf{Late instar larvae} & \textbf{Adult} \\
\hline
 & \textbf{Baseline (0\%)} & 394.5105 & 73.3391 & 2.4402 & 4.4814 & 3.1881 \\
\hline
$a_1$ & $+20\%$ & 394.5071 & 73.3398 & 2.4403 & 4.4815 & 3.1882 \\
\hline
$a_1$ & $-20\%$ & 394.5138 & 73.3383 & 2.4402 & 4.4812 & 3.1880 \\
\hline
$a_2$ & $+20\%$ &  392.9237 & 73.6393 & 2.4236 & 4.5563 & 3.2396 \\
\hline
$a_2$ & $-20\%$ & 396.0737 & 73.0442 & 2.4570 & 4.4075 & 3.1375 \\
\hline
$\gamma_1$ & $+20\%$ & 394.1299 & 73.3388 & 2.4404 & 4.4812 & 3.1880 \\
\hline
$\gamma_1$ & $-20\%$ & 394.8911 & 73.3393 & 2.4401 & 4.4816 & 3.1882 \\
\hline
$\gamma_2$ & $+20\%$ & 373.6952 & 73.5979 & 2.4287 & 4.5366 & 3.2280 \\
\hline
$\gamma_2$ & $-20\%$ & 415.4209 & 73.0707 & 2.4536 & 4.4203 & 3.1450 \\
\hline
$\mu_2$ & $+20\%$ & 395.4355 & 73.1716 & 2.4510 & 4.4353 & 3.1574 \\
\hline
$\mu_2$ & $-20\%$ & 393.5747 & 73.5085 & 2.4295 & 4.5280 & 3.2192 \\
\hline
$\mu_3$ & $+20\%$ & 403.2960 & 71.7527 & 2.5472 & 4.0475 & 2.8985 \\
\hline
$\mu_3$ & $-20\%$ & 384.6634 & 75.1280 & 2.3328 & 4.9767 & 3.5176 \\
\hline
\end{tabular}
\end{table}

\begin{table}[htbp]
\centering
\caption{\textbf{Reproductive stage output changes under ±20\% OAT parameter variation.}}
\label{Table 6}
\begin{tabular}{lcccccc}
\hline
\textbf{Parameter} & \textbf{Change} & \textbf{Maize} & \textbf{Eggs} & \textbf{Early instar larvae} & \textbf{Late instar larvae} & \textbf{Adult} \\
\hline
 & \textbf{Baseline (0\%)} & 63.0360 & 36.2653 & 2.0569 & 1.7599 &  1.3357 \\
\hline
$a_3$ & $+20\%$ &62.6583 & 36.2699 & 2.0597 &1.7624 & 1.3376 \\
\hline
$a_3$ & $-20\%$ & 63.4169 & 36.2607 & 2.0540 & 1.7573 & 1.3337 \\
\hline
$a_4$ & $+20\%$ &  61.0771 & 36.3029 & 2.0505 & 1.7756 & 1.3475 \\
\hline
$a_4$ & $-20\%$ & 65.9066 & 36.2815 & 2.7169 & 1.6808 & 1.0163 \\
\hline
$\gamma_3$ & $+20\%$ & 54.2716 & 36.2578 & 2.0629 & 1.7481 & 1.3276 \\
\hline
$\gamma_3$ & $-20\%$ & 72.6887 & 36.2732 & 2.0497 & 1.7725 & 1.3442 \\
\hline
$\gamma_4$ & $+20\%$ & 37.2212 & 36.2527 & 2.0634 & 1.7282 & 1.3136 \\
\hline
$\gamma_4$ & $-20\%$ & 99.4202 & 36.2724 &2.0533 & 1.7919 & 1.3577 \\
\hline
$\mu_2$ & $+20\%$ & 64.4437 & 36.2504 & 2.0414 & 1.7445 & 1.3244 \\
\hline
$\mu_2$ & $-20\%$ & 61.6386 & 36.2804 & 2.0724 & 1.7754 & 1.3470 \\
\hline
$\mu_3$ & $+20\%$ & 73.3429 & 36.1011 & 2.1203 & 1.6274 &1.2363 \\
\hline
$\mu_3$ & $-20\%$ & 52.4815 & 36.4508 & 1.9892 & 1.9128 & 1.4500 \\
\hline
\end{tabular}
\end{table}
\newpage

\section{Discussion and Concluding Remarks }\label{sect5}
In this paper, we developed a stage-structured mathematical model to investigate the effect of FAW infestation on maize farming, during the vegetative and reproductive stages. Analysis revealed that each model admits a unique solution that remains positive and bounded for all \(t\geq 0 \). Furthermore, the model was shown to possess four equilibrium points: the trivial equilibrium, the non-trivial equilibrium, the maize extinction equilibrium  and the coexistence equilibrium. The local stability was established using the Jacobian matrix and the Routh-Hurwitz criterion under the conditions stated in Theorems \ref{theorem3}-\ref{theorem6}, while the global stability was investigated using Lyapunov functions following the approach presented in subsubsection \ref{section2.1.6}.
Sensitivity analysis identified the parameters with the greatest influence on the FAW population dynamics.
Numerical simulations further indicate substantial reductions in the maize plant population during the vegetative and reproductive stages of maize production due to increased egg production and larval population. The severe damage to the maize plant population caused by high egg, larval and adult moth populations  motivated the extension of the models to incorporate optimal control strategies, including traditional methods like handpicking and chemical pesticides. Simulation results demonstrated that the proposed control strategy effectively suppressed the FAW population, resulting in an increase in the maize plant population. Compared to existing studies, the proposed model provides a more realistic representation of the FAW population dynamics by explicitly separating the larval population into early and late instar larvae. This distinction captures differences in feeding behavior and developmental progression that are often neglected in previous models. Nevertheless, the proposed model does not account for environmental factors such as temperature and rainfall, which are known to influence the development, survival and reproduction of the FAW population. Although chemical pesticides remain an effective means of controlling FAW, their excessive use may pose risks to human health and the environment. Therefore, we recommend integrating environmentally friendly approaches, such as mating disruption techniques, into future FAW management strategies. By reducing successful mating and subsequent egg production, this approach has the potential to lower larval population while minimizing reliance on chemical pesticides. Future studies should also incorporate Cost-effective analysis, using measures such as the Incremental Cost-Effective Ratio (ICER), to assess the economic efficiency of the proposed control strategies.
Collectively, these extensions would contribute to more sustainable FAW management and improve maize productivity.
\subsection*{Acknowledgment}
The authors sincerely acknowledge the encouragement and support received from members of the department. They are also grateful for the valuable review comments and constructive suggestions provided. The authors deeply appreciate this generous support and insightful feedback.

\subsection*{Source of funding}
This research did not receive any external funding and all costs associated with the study were fully covered by the authors.

\subsection*{Data Availability Statement}
The data used to support the model analysis were obtained from existing literature. Some parameter values were adopted from published studies, while others were reasonably assumed. 
\subsection*{Conflict of interest}
The authors declare that there are no conflicts of interest associated with the publication of this paper.
\newpage
\renewcommand{\refname}{References}
\bibliographystyle{unsrt}
\bibliography{ref}

\newpage
\end{document}